\newcommand{\winner}{{\mathcal F}}
\newcommand{\pSet}{{\mathcal L}(C)}
\newcommand{\vecQ}{{\mathbf Q}}
\newcommand{\famQ}{{\mathcal Q}}
\newcommand{\vecP}{{\mathbf P}}
\newcommand{\score}{{\displaystyle{score}}}
\newcommand{\segment}[3]{[#1\ ;#3\ ;#2]}
\newcommand{\segleft}[3]{(#1\ ;#3\ ;#2]}
\newcommand{\segright}[3]{[#1\ ;#3\ ;#2)}
\newcommand{\segopen}[3]{(#1\ ;#3\ ;#2)}
\newcommand{\kendal}{{d_{swap}}}
\newtheorem{definition}{Definition}
\newtheorem{theorem}{Theorem}
\newtheorem{cor}{Corollary}
\newtheorem{lemma}{Lemma}
\providecommand{\keywords}[1]{\textbf{\textit{Keywords:}} #1}
\title{Lie on the Fly:\\ Strategic Voting in an Iterative Preference Elicitation Process  }
\author{Lihi Naamani-Dery$^1$, Svetlana Obraztsova$^2$, Zinovi Rabinovich$^3$, Meir Kalech$^4$
\\ $^1$Ariel University, Ariel, Israel. lihid@ariel.ac.il
\\ $^2$Nanyang Technological University, Singapore. lana@ntu.edu.sg
\\ $^3$Nanyang Technological University, Singapore. zinovi@ntu.edu.sg
\\ $^4$Ben-Gurion University, Beer Sheva, Israel. kalech@bgu.ac.il
}
\date{}
\begin{document}
\maketitle

\begin{abstract}

A voting center is in charge of collecting and aggregating voter preferences. In an iterative process, the center sends comparison queries to voters, requesting them to submit their preference between two items. 
Voters might discuss the candidates among themselves, figuring out during the elicitation process which candidates stand a chance of winning and which do not. Consequently, strategic voters might attempt to manipulate by deviating from their true preferences and instead submit a different response in order to attempt to maximize their profit. We provide a practical algorithm for strategic voters which computes the best manipulative vote and maximizes the voter's selfish outcome when such a vote exists. We also provide a careful voting center which is aware of the possible manipulations and avoids manipulative queries when possible. In an empirical study on four real world domains, we show that in practice manipulation occurs in a low percentage of settings and has a low impact on the final outcome. 
The careful voting center reduces manipulation even further, thus allowing for a non-distorted group decision process to take place.
We thus provide a core technology study of a voting process that can be adopted in opinion or information aggregation systems and in crowdsourcing applications, e.g., peer grading in Massive Open Online Courses (MOOCs). 
\end{abstract}
\keywords{Iterative voting, Preference Elicitation, Group decisions, Crowdsourcing}

\section{Introduction}
Voting procedures are used for combining voters' individual preferences over a set of alternatives, enabling them to reach a joint decision. However, sometimes the full set of preferences is unavailable.
Take, for example, a recruiting committee that convenes to decide on the appropriate candidate to fill a position. Ideally, each committee member is required to rank all applicants; then a joint decision is reached based on all opinions (see e.g. \cite{chen2013mining}). However, as their time is limited, each committee member is reluctant to describe and disclose a complete list of ranked preferences (see e.g. the discussion in \cite{xia2011maximum}). As another example, consider peer grading in Massive Open Online Courses (MOOCs). Since students are not professional educators, they are not trained to provide grades in absolute terms. Rather, students provide comparative information by answering some binary comparative queries (see e.g. \cite{capuano_etal_MOOC_2017}). Even when the voter is acquainted with all of the candidates, it is easier to answer relative comparison queries than to rank all of the alternatives \cite{balakrishnan2012two}. Furthermore, voters are more accurate when making relative comparisons than (a) ranking all items \cite{miller1956magical}; and than (b) presenting precise numerical values \cite{damart2007supporting}.

To minimize the degree of open preference formulation, a preference elicitation technique can be applied. One such technique is iterative relative comparisons, i.e., querying the voters for their preference between two candidates. 
It has been theoretically shown that not all preferences are needed for reaching a joint decision when full information about voter preferences is available, and that the same joint decision can be reached with partial information \cite{Konczak05}. Consequently some practical algorithms for eliciting a minimal set of user preferences using relative comparisons are available, showing that in practice the required information can be cut by more than 50\% \cite{lu_boutilier_2011, naamani2015}. 

These algorithms assume that voters are sincere in their response and that intermediate results are not available to the voters. However these assumptions might not hold in real world scenarios. As a toy example, consider three candidates: $c_1,c_2,c_3$. After a few iterations of voting is it apparent that either $c_2$ or $c_3$ will be elected (since $c_1$ received zero votes and almost all of the voters have been queried). Now, assume that voter $v_1$ prefers $c_1$ over $c_2$. However, given that $c_1$ stands no chance of winning, $v_1$ may choose to vote strategically and state that he prefers $c_2$ over $c_1$ in order to reduce $c_3$'s chance of winning.
To the best of our knowledge, iterative preference elicitation has not been studied with strategic behavior nor has the issue of intermediate results been addressed. Specifically we assume that:

\begin{itemize}
\item \textbf{Intermediate results are available} - it is possible that the voters learn the intermediate results either directly, e.g., if the intermediate results are published during the process; or indirectly, e.g., the voters discuss their preferences and reveal how they responded to the relative comparison queries they received; or through an information leak, e.g., someone tapped in on the voting center in charge of aggregating the voters.
\item \textbf{Strategic behavior is possible} - voters may adopt a strategic behavior in order to manipulate the outcome of the election towards their favorite candidates ~\cite{Far69,laf87}. This behavior has been observed in online applications (see e.g.~\cite{zou_meir_parkes_2015}).
\end{itemize}

In this paper, we set out to study strategic behavior in incremental voting processes.
We follow \cite{lu_boutilier_2011} and \cite{naamani2015} and assume that the voting center proceeds in rounds; in each round the voting center selects one relative comparison query, i.e., one voter to query for her preference between two candidates.
Deviating from previous research, we assume that at the end of each round, the voting center directly exposes the intermediate results, i.e., the candidates that stand a chance to win the elections. Thus the voters may attempt to manipulate by sending the voting center an insincere response in order to promote or avoid certain candidates according to their truthful preferences.
Note that voters cannot erase their past responses. In fact, the center never queries when it can infer an answer, as such a query would be wasted in terms of eliciting new information. Essentially, the center maintains a closure by inference of all obtained preferences. This means that the center would immediately recognize query responses inconsistent with previously obtained preferences. As with other knowledge bases, such inconsistencies can be resolved  \cite{brewka_91,brewka_etal_97}, the simplest resolution being to invalidate the voter or even election entirely. As a result, the very process of finding an effective strategic vote becomes a non-trivial problem.

\textbf{Contributions:}
There are two main challenges in a model that allows strategic voting in iterative preference elicitation. One challenge is from the voting center's point of view, the second challenge is from the voter's point of view.
The voting center's main challenge is to try and avoid manipulations by presenting only ``safe" relative comparison queries to the voters, i.e., queries in which the voters must answer truthfully since they are not able to compute a strategic vote.
Nevertheless, each voter's challenge is to maximize her own selfish outcome, as is done in the voting literature since the classical works of \cite{Gib73} and \cite{Sat75}. This is achieved if and when the voters are able to compute a strategic vote.
For the first challenge, we present a \textbf{careful voting center} (Section~\ref{sec:ild:sub:careful_voting_center}) which tries to prevent manipulation. The voting center will not present queries over candidates that can be manipulated by a strategic voter, unless it has no other choice. However, this is only a conclusion to our development of a wide range of possible manipulations. 
More specifically, for the second challenge, we provide a \textbf{manipulation algorithm for voters} (Section~\ref{sec:ild:sub:algo_build}) in an iterative voting setting. We present a dominance-based heuristic in order for voters to submit strategic responses that will maximize their outcome and will not contradict their previous statements.
We prove the soundness and completeness of the manipulation algorithm, meaning, if a manipulative response (a preference over queried candidates) exists then a manipulative voter will submit it to the voting center.
By addressing these challenges we provide the foundations for a realistic group decision support system.

 We evaluate the impact of the careful voting center and the manipulation algorithm on four real-world data sets.
 We compare manipulative voters to truthful voters in (a) a careful voting center and (b) a na{\"i}ve voting center that does not attempt to avoid manipulations.
 We examine three research questions: (i) How often do manipulations occur in practice? (ii) Is the final result impacted by manipulations? (iii) How do manipulations impact the number of iterations required to reach the final result? We show that in practice, manipulation happens in a low percentage of the settings and has a low impact on the final outcome. A careful voting center reduces the occurrence of manipulation even further,
 but the tradeoff is that in turn, there is an increase in the amount of queries needed in order to end the iterative process and find a winner.

 Since iterative preference elicitation with strategic behavior is a novel idea, we chose to begin with the Borda voting rule \cite{de1781memoire}.
 Borda voting is considered a robust and accurate voting rule. The Borda rule is a positional voting rule whose scores are generated by ordered weighted averaging \cite{Lapresta_2017}.
 It must be noted that several general results have been obtained regarding the complexity of manipulating the Borda voting rule with partial information \cite{conitzer_walsh_xia_2011,Conitzer11}. However, these works assume that a complete preference order is submitted as a ballot. Moreover, it has been shown that for Borda, partial voting increases the situations where strategic voting is possible, and the complexity of computing a manipulation decreases \cite{narodytska2014computational}. This, in turn, supports the need to construct a careful voting center.

An earlier and shorter version of this work was published in the proceedings of IJCAI-2015 \cite{Dery15}. In this paper we expand both the theoretical and empirical parts of the research. Beyond providing detailed explanations of the manipulation algorithm for voters in an iterative voting setting, we formalize the constraints in which a voter is able to manipulate the voting. Also, we add a proof to the soundness and completeness of the manipulation algorithm. An additional important contribution is in the wide set of experiments added to this version. We added three real-world data sets in order to prove that our conclusions are domain independent. The new experiments analyze interesting aspects of the manipulation algorithms and examine its impact in practice.

This paper is structured as follows. In Section~\ref{sec:rel_work} we survey the current work on preference elicitation with partial information and the current work on iterative voting processes.
In Section~\ref{Model} we lay out the model preliminaries before we present a detailed algorithm in Section~\ref{ivc_ldm_theory}. In Section~\ref{sec:exps} we set forth the empirical study carried out to answer our research questions. Finally, we provide some conclusions in Section~\ref{sec:conclude}.

\section{Related Work}\label{sec:rel_work}
In this paper we fuse together two distinct subjects of interest: a) multi-query (iterated) preference elicitation and b) iterative voting processes.
The distinction between the two is seemingly irreconcilable. On one hand side, preference elicitation assumes truthful but communication limited voters. Iterative voting processes, on the other hand, focus explicitly on manipulative, deceitful voter nature, but make no limiting assumption on the amount of communication required to cheat. Nonetheless, as we will show, the two can co-exist, yielding a more realistic picture of time-extended voting processes with limited communication. To make our intent more clear, below we provide an overview of the two subjects, and highlight our innovation against their background.


\subsection{Preference Elicitation with Sincere Voters}

Traditionally, preference elicitation is performed via multi-stage processes. At each stage the voting center selects one voter and queries for a portion of her preferences. It is assumed that voters respond sincerely, do not know each other's preferences, and are unaware of any intermediate results. The latter, although not completely fool-proof, provides one of the better insurances that voters will not attempt strategic manipulation of the outcome~\cite{RE12}. Under these conditions, the voting center can concentrate on minimizing the number of queries and the amount of information it requires from a voter.

Sequential voting studies the order in which the voting center should present queries to the voters. The strategy here refers only to the voting center, that may or may not alter the outcome, pending on the sequence of queries it selects \cite{conitzer2009hard,xia2011strategic}. However, the assumption is that all voter preferences are known to the voting center. We assume that the voting center does not know anything regarding voter preferences, other than the preferences that have already been submitted.

The communication complexity of preference elicitation for various voting protocols has been analyzed, and upper and lower bounds are available ~\cite{Conitzer05}.
In general, for most voting protocols, in the worst case voters should send their entire set of preferences.
Other theoretical bounds for the computation of necessary winners have been previously addressed ~\cite{betzler2009multivariate,pini2007incompleteness,walsh2007uncertainty}.
At the end of each iteration, it is possible to compute which candidates may still have a chance of winning and which will certainly win. These sets of candidates are known as the set of possible winners and a set of necessary winners respectively~\cite{Konczak05}.
This has been done in various settings, for example, in tournaments \cite{aziz2015possible}.
We adopt this approach and compute the set of possible winners at each stage of the voting process.

A candidate winning set is defined as the set of queries needed in order to determine whether a candidate is a necessary winner. For rules other than the plurality voting, computing this set is NP-hard~\cite{ding_lin_2013}. Following this theorem, heuristics for preference elicitation have been suggested, with the goal of finding the necessary winner using a minimal number of queries. One such heuristic operates under the assumption that each voter holds a predefined decreasing order of the preferences. In an iterative process, the voters are requested to submit their highest preferences; the request is for the rating of a single item from all the voters~\cite{Kalech11}. The major disadvantage of this approach is that it requires the voters to predefine their preferences, which can be inconvenient to the voters. Another heuristic assumes that each voter can be approached only once \cite{Pfeiffer12}. However we assume that each voter can be approached multiple times and that the voters might not predefine their preferences beforehand.

A practical elicitation process that follows these assumptions is proposed for the Borda voting protocol using the minmax regret concept. The output is a definite winner or an approximate winner \cite{lu_boutilier_2011}. Another practical elicitation framework for the Range and Borda voting rules introduces two heuristics for choosing which voter to query regarding which candidates. One heuristic is based on the information gained by the query and the other heuristic tries to maximize the expected score (ES) \cite{naamani2015,naamani-dery_etal_2014, naamani2016reducing}. In this paper we assume that a naive voting center, i.e. a voting center that does not attempt to block strategic behavior, will query the voters according to the ES heuristic.

\subsection{Iterative Voting Processes and Strategic Depth}
Like preference elicitation, iterative voting is also performed via multi-stage processes. At each stage a voter is selected to examine the current election outcome, and is granted the possibility to alter her ballot, after which the election outcome is re-evaluated.

A priori it is unclear whether an iterative voting process will ever stabilize either in strategy (where no voter wishes to change her ballot) or in outcome (where the election outcome no longer changes, even though ballots may). Both converging and cycling voting processes have been demonstrated (see e.g. ~\cite{bcmp_2013,rus1,LR12,MPRJ10,orm_2014_ecai,RW12}). Encouraged by these results, researchers proceeded to study stable points of iterative voting processes (e.g. ~\cite{rolmr_2015_aaai}); investigated voting dynamics, i.e. families of iterative voting strategies, to provide convergence guarantees (e.g. ~\cite{Grandi13,lev2016convergence,MLR14,omprj_2015_aaai,RE12}); and have even expanded the model to include the ability of candidates, rather than voters, to behave strategically ~\cite{brill_conitzer_2015,oepr_2015,porkj_2015}.


Now, it is necessary to separate features of voting dynamics that support convergence, and actual strategic behavior that satisfies these features. Furthermore, even if a convergent strategy is found, can we ensure that it is computationally feasible? This question is not trivial. Although almost all voting rules can be manipulated~\cite{Gib73,Sat75}, it may be difficult to calculate such a manipulation\cite{walsh2011hard}. In fact, in incomplete voting scenarios, both a manipulating strategy and an estimate of the election outcome can be computationally hard (e.g.~\cite{pini2007incompleteness,walsh2007uncertainty}). This computational difficulty is of particular importance for iterative voting, as the information available to a voter at any given stage is incomplete.

One of the major breakthroughs in this direction came from considering a softer form of manipulation by strategic voting: one that will not worsen, but just may improve, the election outcome (e.g.~\cite{eopr_2016,MLR14,RE12}). In a sense, this is a {\em safe} manipulation. It considers all possible situations where the given piece of information holds; and then chooses a ballot augmentation that would improve the outcome in some of those possible scenarios, but damage none. Termed {\em a locally dominant} strategy, this behaviour is myopic in its original design.
However, it is possible to push this idea a bit further, and allow each voter to act non-myopically, assuming even the unlikely case that others will act in her favor~\cite{bn-dorr_2016_m-pref, olprr_MPREF_2016}. Of course this latter, optimistic form of behavior may result in a suboptimal stable point of the iterative voting process.



Against this background, our paper innovates the following fusion: 
\begin{itemize}
\item {\bf Preference elicitation with insincere voters:} Dropping the assumption that voters are always sincere, the elicitation process has to change dramatically as well. In particular, unlike any previous elicitation heuristic, we introduce {\em dual} purpose selection of preference queries. First, as was originally intended, these queries are designed to calculate the election winner as quickly as possible. Second, referring to the core principle of eliciting {\em true} preference, we design these queries to be manipulation resistant.
\item {\bf Iterative voting with partial preferences:} Following the common iterative voting assumption, we assume that intermediate results are available to the voters. However, differently from the common procedure, we investigate a situation where a voter is limited in her communication, and may only answer a given query. 
  \item {\bf  Strategic voting in an iterative preference elicitation process:} We show, however, that in spite of the limited voter-center communication, voters may still manipulate the outcome. To achieve this, we rely on a variant of {\em locally dominant} principle of manipulation, where we introduce additional {\em guidance} in selecting the manipulative vote. Specifically, we require minimization of preference distortion applied to achieve the manipulation effect. Perhaps counter-intuitively, satisfying this additional guidance requirement leads to a polynomial time computable set of manipulations. 
\end{itemize}

Now, it is possible when considering the overall behavior induced by our innovations, to see at first a degree of similarity to {\em sequential voting}. In sequential voting (e.g.~\cite{DE10,conitzer2009hard,xia2011strategic}), rather than casting their ballots simultaneously, voters cast their ballots in a sequence. This way, voters may adapt to preferences of preceding voters. However, a deeper look clearly shows the difference. First, in sequential voting there is no possibility to alter one's vote -- that is the domain of iterative voting processes. Second, the common assumption of sequential voting is that voters reveal their full preference order, as opposed to our explicit limitation on the rate of communicating.

To conclude, the voting process in our paper is a combination of preference elicitation and iterative voting. We borrow from each to describe a more realistic voting interaction. In more detail, the voter is presented with the current election outcome (similar to iterative voting), and is requested to respond to one relative comparison query (similar to some preference elicitation heuristics).  Furthermore, we consider strategic voting behavior, where the voter deviates from her true preferences and submits a response that maximizes her utility. Such interaction is an explicit meld of our two subjects of interest, where each portion necessarily adopts some features of the other. To appreciate this recall that preference elicitation typically assumes sincere voters (which we do not), and iterative voting processes commonly assume the ability to communicate a change of the entire preference order (which we assume to be limited). The fusion is further underlined by our introducing center's awareness of the manipulative nature of voters. In other words, while the main goal of the center remains to determine the winning candidate with a minimal number of queries, the center must now also ensure that outcome reflects true voter preferences, without distortion. To this end, we design a voting center capable of identifying the opportunities of strategic manipulation by voters. This allows the center to balance the need to extract preference information and the need to suppress manipulations. The latter, in our design, takes the form of avoiding certain manipulation-prone queries, potentially stagnating some voters until no other recourse is possible, but query them. Notice that the center and voter population are now integral parts of the same system -- their respective views of the problem are now linked, and all participants are treated as proactive.

\section{The Model} \label{Model}
Our model consists of three elements: a set of candidates $C=\{c_1,...,c_m\}$, a set voters $V=\{v_1,...,v_n\}$, and a voting center. The latter two are active, in the sense of expressing, eliciting and aggregating the opinion of voters about candidates. The opinion of a voter $i$ about various candidates is expressed by a preference order (i.e., a ranking)  $P_i^{true}$ over the set of candidates, i.e., $P_i^{true}=[c_{i_1},...,c_{i_m}]$ where $c_{i_1}$ is $v_i$'s most preferred candidate. We will also denote the relative preference of candidate $c_j$ over $c_k$ by $c_j\succ_{i} c_k$, omitting the subscript where the voter is obvious, or, equivalently by $P_i^{true}(c_j,c_k)$. In addition, we will denote the space of all possible preference orders by $\pSet$. These preferences are an inherent, {\em private} characteristic of each voter, and nothing about $P_i^{true}$ is known to either other voters or the voting center. In fact, not even a possible distribution of preferences within the voter population is a priori known. The voting center, therefore, is tasked with eliciting sufficient portion of these private preferences so as to determine a candidate most agreeable with the given population of voters in the following sense. 

The voting center seeks to implement a voting rule, i.e. a mapping $\winner:\pSet^{n}\rightarrow C$, by which preferences of voters are aggregated and mapped to a single candidate, termed the {\em winner}. In this paper we will assume that the implemented voting rule is Borda~\cite{de1781memoire}. Borda voting rule belongs to a class of rules that associate utility with the relative preference of candidates, and the set of utility values, $\alpha=(\alpha_m\geq \alpha_{m-1}\geq \dots\geq\alpha_1)$, is fixed. In particular, if the top choice of voter $v_i$ is to become the winner, then that voter will receive $\alpha_m$ utility; if the second most preferred candidate of $v_ii$ is to become the winner, then $v_ii$ is presumed to receive $\alpha_{m-1}$; and so on, so that $v_i$ will gain only $\alpha_1$ if her least preferred candidate is declared to be the winner. For Borda, the set of utilities is $\alpha=(m,m-1,\dots,1)$, which allows us to formally define the utility voter $v_i$ extracts from candidate $c_j$ (or {\em score} of candidate $c_j$ from $v_i$) by
$$
\sigma_{i,j}=1+\big|\{c\in C| P_i(c_j,c)\}\big|.
$$
The Borda rule then dictates that, given the joint set of preferences $\vecP=(P_1,\dots,P_n)$ of all voters, the winner is the candidate that yields the maximum total utility across all voters, i.e.:
$$\winner^{Borda}(\vecP)=\arg\max\limits_{c_j\in C}\sum\limits_{v_i\in V}\sigma_{i,j},$$
where a tie between two equally profitable candidates is broken by an a priori set lexicographic order over the candidates. For the brevity of notation, we will omit the ``Borda'' superscript in the remainder of the paper, as we are concentrating on this particular voting rule in our current work. In fact, for the sake of clarity, Table \ref{table:notation_summary} summarizes the notations we use.


Now, in our model, the voting center has no access to the complete, joint preference profile of all voters $\vecP=(P_1,\dots,P_n)$. In fact, the center has no direct access to a complete preference order $P_i$ of any voter. Therefore, the winner calculation is replaced by an interative, approximate elicitation and winner estimation process. We follow~\cite{lu_boutilier_2011,naamani2015} in defining this process. Specifically, at every step of this iterative process a voter-item-item query is generated, that describes a question addressed to a single voter regarding her relative preference among some pair of candidates. Our model does not assume any specific query selection protocol and several possibilities for selecting the voter-item-item queries exist, however we do assume that the center would like to minimise the number of queries that it makes. 




Having been queried via a voter-item-item query $(v_i,c_j,c_k)$, voter $v_i$ can respond either $c_j\succ c_k$, declaring that she prefers $c_j$ to $c_k$, or respond $c_k\succ c_j$ to express the inverse preference. In turn, the voting center collects all query responses into a collection of {\em partial preferences} $\vecQ=(Q_1,\dots,Q_n)\in \famQ^n$, where $Q_i$ is the collection of pair-wise preference expressions from voter $v_i$. Combined with the fact that the voting center seeks to implement the Borda voting rule, $\vecQ$ can be used to estimate the range of possible candidate scores and an estimate of a Borda winner~\cite{Konczak05}. This is done by considering whether a complete joint preference order $\vecP$ exists that could generate (or, more formally, extend) the partial information contained in $CL(\vecQ)$, i.e. all expressed preference comparisons and all preference comparisons inferred by transitivity\footnote{Notice that $CL(\vecQ)$ is simply a composition of $CL(Q_i)$, i.e. transitive closures of individual voter response sets.}.

\begin{definition}{{\bf[Possible and Necessary Winners]} (adapted from~\cite{Konczak05})}\\
Let $P_i\models Q_i$ denote the fact that forall $(c_j,c_k)\in CL(Q_i$ holds $P_i(c_j,c_k)$, and let $Ext(\vecQ)=\{\vecP\in\pSet^n | \forall\ v_i\in V\ P_i\models Q_i\}$, that is we look at all possible complete joint preference profiles that are consistent with the responses recorded by the voting center. Then:
\begin{itemize}
\item $c\in C$ is a {\bf necessary winner} for $\vecQ$ if and only if {\bf forall} $\vecP\in Ext(\vecQ)$ holds $\winner(\vecP)=c$;
\item $c\in C$ is a {\bf possible winner} for $\vecQ$ if there {\bf exists} a $\vecP\in Ext(\vecQ)$ so that $\winner(\vecP)=c$.
\end{itemize}
The Possible Winners for $\vecQ$, $PW=PW(\vecQ)\subseteq C$, is a set that consists of all possible winners for a given set of voter preference responses to queries. 
\end{definition}

Notice that a similar, and somewhat inverse, process is also possible. I.e. we can begin from a set of Possible Winners and ask whether a complete joint preference order $\vecP$ is consistent with it. We will later employ this inverse reasoning in defining domination among possible voter strategies in Section~\ref{sec:model:sub:ldm}. 


The calculation of the PW set drives the voting center to ask further queries, and it does so until the set of possible winners that satisfies all obtained query answers $\vecQ$ is unequivocal, i.e. turns into the necessary winner. At that point, the voting center stops and declares the necessary winner as the winner of the election. 
To drive the voters to disclose their preferences, the voting center, before querying a voter, provides her with the current set of possible winners, PW, similarly to~\cite{conitzer_walsh_xia_2011,RE12}. As voters are selfish, the queried voter is driven to answer a query as this is only manner in which she can effect a change in PW, possibly even in her favor. 
This means that a voter's response to a query is not necessarily dictated by $P_i^{true}$. Rather, a voter may choose to answer a query using an alternative preference order $P_i\neq P_i^{true}$ that forces the voting center to reshape the set of possible winners PW to be more beneficial to the queried voter. In the following Section~\ref{sec:model:sub:ldm} we discuss how such alternative orders compare to each other via the concept of {\em local dominance}, and in Section~\ref{ivc_ldm_theory} we contribute a computational procedure to implement this strategic choice.

However, prior to any such discussion, we must settle the question of why use preference orders at all, rather than just calculating momentary beneficial responses to individual queries. The answer is simple: the voting center calculates the set of possible winners based on the received {\it and the inferred} preferences by transitivity. As a result, the voting center can and will detect logical inconsistency in query answers, if they do not conform (extend) to some complete preference order in $\pSet$. Such inconsistency is implicitly assumed to be punishable, and undesirable by voters and the voting center alike. The need to keep track of query answer consistency over time can be satisfied by maintaining a complete preference order, $P_i$, that can serve as an extension to $Q_i$. Hence, whatever strategic manipulation a voter might want to implement in her query answers, it becomes more conventient to discuss the manipulation in terms of a change that must be introduced into $P_i$. Thus, the overall interaction between the voting center and voters is, thus, summarised by the following loop:
%
\begin{itemize}
\item At the beginning, each voter holds her true set of preferences $P_i=P_{i}^{true}$;
\item As long as the Necessary Winner has not been identified:
\begin{enumerate}[a)]
\item based on the partial preferences of all voters $\mathcal{Q}$, the voting center computes a set of the Possible Winners $PW$.
\item the center selects a voter-item-item query, $\langle v_i\langle c_j,c_k\rangle\rangle$;
\item the voter is provided with the current set of Possible Winners $PW$;
\item the voter decides  whether or not to change her profile from her current profile $P_i$ to a new strategic profile $P_i'$;
\item the voter responds with either $c_j\succ c_k $ or $c_k\succ c_j $ according to her updated profile $P_i$;
\item the center updates the incomplete profile by incorporating $v_i$'s answer and applying transitive closure.
\end{enumerate}
\end{itemize}

Returning to the toy example presented in the Introduction, let us instantiate one step of this loop to clarify the process. Consider three candidates: $c_1,c_2,c_3$ and a voter $v_1$ with the preferences: $c_1\succ c_2$ $\succ c_3$.  
The query is $\langle v_1\langle c_1,c_2\rangle\rangle$  . That is, voter $v_1$ is asked to state her preference between $c_1$ and $c_2$ (step a). The voter is also informed that the possible winners $PW$ are $c_2$ and $c_3$ (step b). Since $c_1$, the candidate most preferred by $v_1$, is not in $PW$, $v_1$ chooses to vote strategically. In order to keep track of all of her strategic moves, the voter updates her profile: $P_1' = c_2\succ c_1$ $\succ c_3$ (step c). The voter submits a response: $c_2\succ c_1$ (step d). The partial profile now known to the center is: $Q_1 = c_2\succ c_1$ (step e).

\begin{table}[H]
\centerline{\begin{tabular}{|l|l|}\hline
{\bf Notation}&{\bf Meaning}\\\hline
$C$&A set of $m$ candidates\\
$V$&A set of $n$ voters\\
$P\in\pSet$& An arbitrary preference profile over $C$ in the space of possible profiles\\
$P_i$& Preference profile of voter $i$\\
$P_{-i}$& Joint preference profile of all voters, but $i$\\
$Q_i$& Reported preferences of voter $i$\\
$\langle v_i\langle c_j,c_k\rangle\rangle$&Query to voter $v_i$ regarding preference between $c_j$ and $c_k$\\
$c\succ c'$&Candidate $c\in C$ is preferred to candidate $c'\in C$\\
$P(c,c')$& $c\succ c'$ holds w.r.t $P$, e.g. $P_i(c_j,c_k)$\\
$\segopen{c}{c'}{P}$& A set of candidates in descending order w.r.t P between candidates $c$ and $c'$\\
$\segment{c}{c'}{P}$& As above, but inclusive of $c$ and $c'$\\
$\infty (-\infty)$& Virtual, most (least) preferred by all candidate\\
$\segleft{\infty}{c'}{Q_i}$& $c'$ and all candidates reported to be preferred to $c'$ by voter $i$\\
$P\downarrow_{T}$& The preference order $P$ limited to a subset of candidates $T\subseteq C$\\
$PW_i$ & The set of possible winners in descending order w.r.t $P_i$\\
$(pw_1^i,...,pw_l^i)$ & Elements of $PW_i$\\
$d_{swap}(P,P')$ & Swap distance between two preference profiles\\
$P\models Q$ & Preference order $P$ consistent with revealed preferences $Q$\\
$\mu(P_i)$ & The set of preference profiles consistent with $Q_i\cup{(c_k,c_j)}$, closest to $P_i$\\
$CL\left(Q_i\cup\{(c_k,c_j)\}\right)$& Transitive closure of all preference in $Q_i\cup{(c_k,c_j)}$\\
\hline
\end{tabular}}
\caption{Notation Summary.\label{table:notation_summary}}
\end{table}

\subsection{Locally Dominant Manipulation}\label{sec:model:sub:ldm}
As we have mentioned, in order to guarantee consistency of their answers, each voter maintains a current preference profile $P_i$ that she uses to answer queries from the voting center. A voter may change this profile, as long as it remains consistent with previous answers, if she deems such a change beneficial for the current and, possibly, future query answering. To capture the benefit of adopting one particular preference order over another we employ the concept of {\em local dominance manipulation model} (see e.g.~\cite{conitzer_walsh_xia_2011,MLR14,RE12}). We formally instantiate local dominance below.

First, notice that because of previous answers given by voters, their ability to adopt additional manipulative changes to their current preference order are limited. Specifically, whatever preference order voter $v_i\in V$ decides to adopt when answering a query $(v_i,c_k,c_j)$, this order will have to be a member of the set $\mathcal{P}_i=\{P\in\pSet|P\models Q_i\}$. Second, the improvement that all voters seek is that of a better final outcome declared by the voting center. Thus, a voter will change her current profile from $P_i$ to $P_i'$ if she recognizes that the change (and query answers that it engenders) {\em possibly} entails a better final outcome, i.e. an outcome that ranks higher with respect to her truthful preference. However, the voter would also want to guarantee that there is {\em no possibility} that the change would result in her being worse off when the final outcome is produced. Let us define these {\em possibilities} more formally.

\begin{definition}{\bf[Possible World and Outcome]}
  Let $PW$ be a set of Possible Winners. A joint preference profile, $P_{-i}$, of all agents in $V\setminus{i}$ is termed {\bf a possible world (from the perspective of $v_i\in V$)}, if it is consistent with the set $PW$. More formally, $P_{-i}$ is a possible world if exists $Q_{-i}$ so that $P_{-i}=Ext(Q_{-i})$ and $PW(Q_{-i},Q_i)=PW$. That is, the use of joint preference profile $P_{-i}$, combined with voter $v_i$'s responses, could have led the voting center to generate $PW$. In turn, {\bf a possible outcome (of $P_i$)}, is the candidate that would be declared the winner by the voting center if it had full access to $P_i$ and some possible world $P_{-i}$.
\end{definition}

Notice that $v_i$ has no access to $P_{-i}^{true}$ or the actual $Q_{-i}$, so the concept of possible worlds and outcomes is speculative. However, this speculation does allow $v_i$ to build a strategic argument, if she {\it implicitly} assumes all other voters to be persistently truthful and that they use $P_{-i}^{true}$ to answer all querries. This is because a possible world $P_{-i}$ may generate different possible outcomes for two different preference orders, $P_i$ and $P_i'$, of agent $i$. Thus, the preference order $P_i$, that voter $v_i$ uses to respond to the voting center queries, can now be used as a strategic manipulation means. More specifically, we can compare the 
relative benefit of the set of possible outcomes of $P_i$ vs the set of possible outcomes of $P_i'$. With that in mind, Local Dominance is a particular form of safe choice of $P_i$ with respect to its effect on the set of possible outcomes. 


\begin{definition}{\bf [Local Dominance]}
  A preference order $P_i'$ is a local dominant over preference order $P_i$
if, in at least one possible world, the possible outcome of $P_i'$ is ranked higher than the possible outcome of $P_i$, and in none of the possible worlds the possible outcome of $P_i'$ is ranked lower than the possible outcome of $P_i$.
\end{definition}

When queried, the voter is requested to submit her preference between two candidates only. However, in order to manipulate, more than a single change in current preferences $P_i$ might be needed. A voter would want to ensure that such a continual distortion does not accumulate into a grotesque misrepresentation of her original preferences $P_i^{true}$. Thus, from all of the possible changes in $P_i$, the voter will seek a change that requires the minimal number of swaps. I.e., the voter actively minimises the change in her current preferences at every step. Formally, to compare two current preference profiles $P_i$ and $P_i'$, we use the {\em swap distance}~\cite{bredereck2016large,kendall_38} defined for two linear orders $P$ and $P'$. The distance counts the number of candidate pairs that are ordered differently by two ballots or linear orders.

Summarising these limitations on a manipulative change in current profile, a voter will change her profile to $P_i'\in \mathcal{P}_i$ only under the following voting manipulation (VM) conditions:
\begin{itemize}
\item {\bf Condition-1:} The new preference profile $P_i'$ is a local dominant profile over $P_i$.
\item {\bf Condition-2:} The new preference profile $P_i'$ has the minimal swap distance out of all possible (consistent with previous $i$'s responses) profiles. \label{vm_mindist}
\end{itemize}
For notational convenience, we will denote the set of all preference profiles that satisfy Condition-2 by $\mu(P_i)$. Formally, $\mu(P_i)=\arg\min_{P\models Q_i\cup\{(c_k,c_j)\}}\kendal(P_i,P)$, where $P\models Q_i$ denotes the fact that the preference order $P$ is consistent with the partial order of reported preferences $Q_i$. In turn, $\kendal(P_i,P)$ denotes the swap distance between two preference profiles.

We now define the scenarios where manipulation can be performed.
For a given set of Possible Winners $PW=\{pw_1,...,pw_l\}$ we define $PW_i$ as the ordered vector of possible winners for voter $v_i$: $PW_i=[pw^i_1,\dots,pw^i_l]$, where the order is w.r.t $P_i$. In particular, for any $1\leq j<k\leq l$, we will have $P_i(pw^i_j,pw^i_k)$, i.e. voter $i$ will prefer $pw^i_j$ to $pw^i_k$. We will omit the superscript where the agent is clear from the content, and simply write $PW_i=[pw_1,\dots,pw_l]$.

To set the conditions for local dominance, we use the following set of {\bf ``Common Givens''}, a set of w.l.o.g. assumptions that we will use throughout the remainder of the paper:
\begin{itemize}
\item The current preference profile of $v_i$, $P_i$;
\item The query is: $\langle v_i\langle c_j,c_k\rangle \rangle$, and according to $P_i$: $c_j\succ c_k$;
\item The ordered (w.r.t. $P_i$) vector of Possible Winners of $v_i$, $PW_i$;
\item The closure of the set of current query responses, $Q_i$;
\item Neither $Q_i(c_j,c_k)$ nor $Q_i(c_k,c_j)$ hold, i.e. there is no committed order among the query's candidates.
\end{itemize}

Now, to describe our algorithms and their theoretical features, we will use an interval-like notation for subsets of candidates. The order within the interval will be that of a preference profile, $P_i$, or the set of previously stated preferences, $Q_i$. In particular, $\segment{c}{c'}{P_i}$ will denote a set of candidates in preference profile $P_i$ between two candidates, $c\succ c'$, {\em inclusive} of the two candidates themselves. At the same time, $\segleft{c}{c'}{P_i}$ will denote the same set, but excluding the candidate $c$. Finally, we will naturally use $\infty$ in this notation, so that, e.g., $\segright{c}{-\infty}{Q_i}$ will denote all candidates that have been reported to be below $c$ and $c$ itself. To maintain the common left-to-right descending order notation, interval boundary points will also appear in descending order. Notice that, in this respect, $+/-\infty$ are used consistently. We will also allow preference orders to be imposed on (or limited to) an interval. Formally, $P\downarrow_{\segment{c}{c'}{P'}}$ will denote a complete order of elements of $\segment{c}{c'}{P'}$ consistent with the preference order $P$.

To instantiate the notation of preference intervals, and preference projections on such intervals, consider the following example. Let $P$ be a preference order over 6(six) candidates so that $c_2\succ c_1\succ c_3\succ c_5\succ c_4\succ c_6$, and $P'=c_4\succ c_6\succ c_1\succ c_5\succ c_3\succ c_2$. Then $[c_1,P,c_4]=\{c_1,c_3,c_4,c_5\}$, while $[c_1,P,c_4)=\{c_1,c_3,c_5\}$, and $\segment{\infty}{c_1}{P'}=\{c_1,c_4,c_6\}$. Finally, the preference order $P''=P\downarrow_{\segment{\infty}{c_1}{P'}}$ will impose the order inherited from $P$ on $\segment{\infty}{c_1}{P'}=\{c_1,c_4,c_6\}$. In particular, according to $P''$ would hold $c_1\succ c_4\succ c_6$.

\subsection{Remarks on model composition}\label{sec:model:}

Some of our modelling decision may raise questions. For instance, the myopic nature of a voter's reaction to queries, as well as the myopic nature of voter's response to queries. One could pose the question of why a strategic, manipulative voter does not consider the response of other voters to her manipulation. In fact, why aren't there any beliefs about the {\em strategic} behaviour of other voters in calculating a manipulation?

In principle, hierarchies of beliefs and cognitive hierarchies, which is what reasoning about strategic reasoning of others may be formally called, have been around for quite some time and across the AI board (e.g., definitely non-exhaustively, ~\cite{monderer_samet_89,Halpern:1990:KCK:79147.79161,boergers_94,Halpern:1985:GML:1625135.1625229,gmytrasiewicz_doshi_2005}), and even of a freshly renewed interest in voting games specifically (e.g.~\cite{chc_2004,cc-gi_2013,DBLP:journals/corr/ElkindGRS17}). However, the practical, human behaviour evidence is that such hierarchies are very shallow\footnote{Many a joke deal with chains of ``I know that he thinks that I know...'', and are in fact based on human inability to handle deep nested beliefs. Comically this is even witnessed in popular culture, e.g. the episode of Friends,''The one where everybody finds out``.} in real-world scenarios~\cite{nagel_95,stahl_wilson_1994}. Furthermore, some of the more recent analysis suggests that a large proportion of people are in fact myopic~\cite{wl-b_aaai_2010}. Rooted in desire to make our model closer to the real world, we take this latter cue and, while assuming some common belief forced by the voting center announcements, adopt myopic strategies by voters. 

There is another, more technical issue that we would like to address: the reason to define manipulations as expressions of some altered version of preferences $P_i\neq P_i^{true}$. Our reasoning is three-fold. First, notice that for any manipulative answer to be effective, there should be no immediate indication of a lie detectable (and potentially punishable) by the voting center. Thus all answers, manipulative or not, should be consistent with each other over time. One of the ways to keep track of this consistency is to maintain a coherent preference order with respect to all answers given, In hindsight, this preference order $P_i$ (potentially different from $P_i^{true}$) can be seen as the order that has generated those answers.

Second, maintaining a coherent preference order allows to measure the general distortion that a manipulative query answer will impose on a voters's ability to express other preferences among candidates. Especially those that are still truthful. There are several ways to treat this distortion over time. One way would be to measure the distortion relative to the original, truthful preference, $P_i^{true}$. However, this would be a measure of how far a voter has {\it already} departed from her original preference in her strive to manipulate the final outcome. The use of such a global reference does not provide the understanding of how much we distort our remaining preferences for future iterations. We believe that a step-by-step distortion, i.e. minimising the distance from the currently used preference order, is a better reflection of future capabilities to manipulatively choose an answer. Notice that it does not mean that $P_i^{true}$ is wholeheartedly abandoned. In fact, as Corollary~\ref{cor:p_true_vs_pw} will show (in the next section), key features of $P_i^{true}$, with regard to those candidate that can win at all, are preserved within $P_i$ throughout all manipulation attempts. Hence our modelling choice in Condition-2.

Third, and perhaps most importantly, the use of a current preference profile $P_i$ allows us to adapt the concept of local strategic dominance. In other words, actually provide some guarantee to the manner in which answering a query effects the final outcome, and strategise over it.

In the following section, we provide our solution to the model. I.e., an efficient computational procedure to find an augmentation to the current preference profile $P_i$ that satisfies all the three aforementioned properties: a) the order must be coherent with all answers given; b) the order must be based on a dominant answer to the current querry; c) the order must allow as much future flexibility as possible, via minimising the momentary swap-distance distortion. 

\section{Interactive Local Dominance}\label{ivc_ldm_theory}
Essentially, our model adapts the concept of Local Dominance to interactive voting scenarios, where voter preferences are only partially known at any given point in time. In this section we provide both the algorithmic and theoretical treatment of the {\bf Interactive Local Dominance (ILD)} concept.

We begin by stating a key feature of ILD, and describe the algorithm to solve it for the particulars of our model. We then proceed with a detailed theoretical analysis of ILD and prove the correctness of our algorithmic solution.

\subsection{Computing ILD response}\label{sec:ild:sub:algo_build}
The following Theorem states that a manipulative response to a query has to maintain the same order of possible winners. Furthermore, for at least one pair of consecutive possible winners the distance between them grows. In turn, the Corollary states that these ordering and distance properties can be consistently traced from the truthful profile throughout all responses of a voter. We defer the proofs of these statements to Section~\ref{subsec:full_theory}, where we will present the encompassing and rigorous theoretical treatment of the guided locally-dominant manipulation.
\begin{theorem}\label{manipulation_features}
Let us assume w.l.o.g. that the enumeration order of possible winners in $PW_i$ is aligned with their order of appearance in $P_i$, that is $pw_a\succ pw_b$ according to $P_i$ if and only if $a<b$.
A preference profile $P_i'$ is a local dominant profile over $P_i$ if and only if the following holds:
\begin{itemize}
\item $pw_\alpha\succ pw_{\alpha+1}$ for all $\alpha\in[1,...,l-1]$ w.r.t. $P_i'$, i.e. the order of possible winners does not change;
\item $|\segment{pw_\alpha}{pw_{\alpha+1}}{P_i'}|\geq |\segment{pw_\alpha}{pw_{\alpha+1}}{P_i}|$ for all $\alpha\in[1,...,l-1]$, i.e. none of the intervals between two consecutive possible winners decreases;
\item Exists $\alpha\in[1,...,l-1]$ so that $|\segment{pw_\alpha}{pw_{\alpha+1}}{P_i'}|\gneq |\segment{pw_\alpha}{pw_{\alpha+1}}{P}|$, i.e. at least one interval between two consecutive possible winner will grow.
\end{itemize}
\end{theorem}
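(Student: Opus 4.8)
The plan is to translate the three geometric bullets about the intervals $\segment{pw_\alpha}{pw_{\alpha+1}}{P_i}$ into arithmetic about the Borda contribution $\sigma_i(c)=1+|\{c'\in C\mid P_i(c,c')\}|$ that $v_i$ assigns to $c$ (write $\sigma'_i(c)$ for the analogue under $P_i'$), and then reason directly about Borda totals across possible worlds. The elementary identity is $\sigma_i(pw_\alpha)-\sigma_i(pw_{\alpha+1})=|\segment{pw_\alpha}{pw_{\alpha+1}}{P_i}|-1$ for consecutive possible winners, so --- noting that the interval notation presupposes the order $pw_\alpha\succ pw_{\alpha+1}$ --- the three bullets say exactly: (i) the chain $pw_1\succ\dots\succ pw_l$ survives in $P_i'$; (ii) each gap $\sigma_i(pw_\alpha)-\sigma_i(pw_{\alpha+1})$ is weakly larger under $P_i'$; (iii) at least one gap is strictly larger. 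I would also record two facts used throughout: for any possible world $P_{-i}$, writing $S(c)=\sum_{v\neq i}\sigma_{v}(c)$ for the aggregate of the others, the Borda total of $c$ is $S(c)+\sigma_i(c)$ (resp.\ $S(c)+\sigma'_i(c)$); and the realised winner $\winner(P_i,P_{-i})$ or $\winner(P_i',P_{-i})$ always lies in $PW$, since $(P_i,P_{-i})$ and $(P_i',P_{-i})$ both extend $(Q_i,Q_{-i})$ whose possible-winner set is $PW$. By Corollary~\ref{cor:p_true_vs_pw} the $P_i$-order on $PW$ agrees with the truthful order, so ``higher for $v_i$'' can be read off $P_i$.

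For the ``if'' direction I would first show $P_i'$ is never worse. Fix a possible world $P_{-i}$, let $pw_\gamma=\winner(P_i,P_{-i})$ and $pw_{\gamma'}=\winner(P_i',P_{-i})$, and suppose toward a contradiction that $\gamma'>\gamma$. The two winning conditions restricted to the pair $\{pw_\gamma,pw_{\gamma'}\}$ give $\sigma'_i(pw_\gamma)-\sigma'_i(pw_{\gamma'})\le S(pw_{\gamma'})-S(pw_\gamma)\le\sigma_i(pw_\gamma)-\sigma_i(pw_{\gamma'})$, while summing (ii) over $\alpha=\gamma,\dots,\gamma'-1$ yields $\sigma'_i(pw_\gamma)-\sigma'_i(pw_{\gamma'})\ge\sigma_i(pw_\gamma)-\sigma_i(pw_{\gamma'})$; hence all these quantities coincide, $pw_\gamma$ and $pw_{\gamma'}$ are exactly tied under both profiles, and the fixed lexicographic tie-break would elect the same candidate in both worlds --- contradicting $pw_\gamma\neq pw_{\gamma'}$. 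Next, for ``strictly better in some world'', I would take $\alpha^{*}$ witnessing (iii), set $d=\sigma_i(pw_{\alpha^{*}})-\sigma_i(pw_{\alpha^{*}+1})$ and $d'=\sigma'_i(pw_{\alpha^{*}})-\sigma'_i(pw_{\alpha^{*}+1})\ge d+1$, and use the possible-world construction developed in Section~\ref{subsec:full_theory} to obtain a legal world in which $pw_{\alpha^{*}}$ and $pw_{\alpha^{*}+1}$ beat every other candidate (under both $P_i$ and $P_i'$) and their external margin $S(pw_{\alpha^{*}+1})-S(pw_{\alpha^{*}})$ equals $d$ or $d+1$, whichever of these two values makes the tie-break fall in the desired way. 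In that world $pw_{\alpha^{*}+1}$ is elected under $P_i$ while $pw_{\alpha^{*}}$, which $v_i$ ranks strictly higher, is elected under $P_i'$.

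For the ``only if'' direction I would argue the contrapositive, splitting on which bullet is violated. If (i) fails, there are $\alpha<\beta$ with $pw_\beta\succ_{P_i'}pw_\alpha$; a legal world in which $pw_\alpha,pw_\beta$ dominate all other candidates and have equal external scores elects $pw_\alpha$ under $P_i$ but $pw_\beta$ under $P_i'$, which is strictly worse for $v_i$. If (i) holds but (ii) fails at some $\alpha$, the $P_i'$-gap there is strictly smaller, and a legal world whose external margin at that pair lies strictly between the two gaps elects $pw_\alpha$ under $P_i$ but $pw_{\alpha+1}$ under $P_i'$ --- again strictly worse. If (i) and (ii) hold but (iii) fails, then every consecutive gap is equal, so $\sigma'_i$ differs from $\sigma_i$ by one and the same constant on all of $PW$; hence in every possible world the Borda order (and all ties) among possible winners is identical under $P_i$ and $P_i'$, so both profiles always elect the same candidate and $P_i'$ is never strictly better --- so it fails local dominance. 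In the first two cases one again appeals to the possible-world construction.

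The real obstacle --- everything else being score bookkeeping plus the tie-break argument above --- is that possible-world construction lemma: producing, for a designated pair of consecutive possible winners, a complete profile $P_{-i}$ that is simultaneously \emph{legal}, i.e.\ $PW(Q_i,P_{-i})=PW$ so that every other $pw_\gamma$ must stay winnable by some $P_i''\models Q_i$, and \emph{controlled}, i.e.\ it pushes the designated pair to the top of the Borda order with a prescribed small integer external margin while depressing every other candidate below both of them under $P_i$ and under $P_i'$ alike. These pull in opposite directions, since a competing possible winner can be depressed by only about $m$ positions before it stops being winnable at all, and reconciling them --- distributing the external scores so that $PW$ is reproduced exactly yet the chosen two-horse race dominates --- is precisely what the full development in Section~\ref{subsec:full_theory} is built to handle; the theorem then drops out of the arithmetic sketched here.
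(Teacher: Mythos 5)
Your proof takes essentially the same route as the paper's: both arguments hinge on constructing, for a designated pair of consecutive possible winners, a possible world $P_{-i}$ in which that pair is in a near-tie dominating every other candidate, so that the winner flips according to whether the Borda gap $|\segment{pw_\alpha}{pw_{\alpha+1}}{\cdot}|-1$ grew, shrank, or the order reversed; your bookkeeping for the ``never worse'' half of sufficiency (summing the gap inequalities over $\alpha=\gamma,\dots,\gamma'-1$ and analysing the tie-break) is in fact more explicit than the paper's, which settles sufficiency with ``a simple reinspection of the proof.'' The one caveat is that the ``possible-world construction developed in Section~\ref{subsec:full_theory}'' to which you defer the legality of these controlled worlds does not exist there --- the lemmas of that section characterize $\mu(P_i)$, not possible worlds --- and the paper itself only asserts, without proof, that its profiles $\mathcal{R}_{-i}$ are consistent with $PW_i$; so the step you rightly single out as the real obstacle is left unestablished in both arguments, and you match rather than fall short of the paper's own standard.
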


\begin{cor}\label{cor:p_true_vs_pw}
Let $\tau>t$, and $P_i^t$, $P_i^\tau$ are the preference profiles of voter $v_i$ at times $t$ and $\tau$ respectively.
Then the set of possible winners $PW$ at time $\tau$ will be ordered in the same way by $P_i^{true}$ (the truthful preference of $v_i$), $P_i^t$ and $P_i^\tau$. Furthermore, the size of each segment between consecutive possible winners in $PW$ will monotonically grow from $P_i^{true}$ to $P_i^t$ to $P_i^\tau$, and the total size of these segments will grow strictly monotonically.
\end{cor}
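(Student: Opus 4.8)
The plan is to follow the voter's profile along the elicitation loop and reduce everything to a single application of Theorem~\ref{manipulation_features} per round. Write $P_i^{(s)}$ for the profile $v_i$ holds going into round $s$, so $P_i^{(0)}=P_i^{true}$ and, for each $s$, either $P_i^{(s+1)}=P_i^{(s)}$ (round $s$ does not change $v_i$'s profile) or $P_i^{(s+1)}$ is a profile $v_i$ adopts because it satisfies Condition-1 and Condition-2 against the set of possible winners $PW^{(s)}$ shown to her in that round. In the latter case Condition-1 together with Theorem~\ref{manipulation_features} gives that the three bullet properties hold for the pair $(P_i^{(s)},P_i^{(s+1)})$ relative to $PW^{(s)}$. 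The only extra fact I need is the monotone shrinking of possible winners: voters never retract answers, so $CL(\vecQ)$ only grows over rounds, hence $Ext(\vecQ)$ and therefore the possible-winner set only shrink, and $PW^{(\tau)}\subseteq PW^{(s)}$ for every $s\leq\tau$. I will prove the three claims round-by-round and compose; the no-op rounds are trivial, so only manipulation rounds matter.

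Fix $PW=PW^{(\tau)}=\{pw_1,\dots,pw_l\}$ (if $l\le 1$ there is nothing to prove). On a manipulation round $s<\tau$, the first bullet of Theorem~\ref{manipulation_features} preserves the $P_i^{(s)}$-order of all of $PW^{(s)}$, hence in particular the order of the subset $PW$; composing over $s=0,\dots,\tau-1$ shows $P_i^{true}$, $P_i^{t}$ and $P_i^{\tau}$ order $PW$ identically, the corollary's first assertion. For the segments, fix consecutive $pw_\alpha,pw_{\alpha+1}\in PW$ and let $pw_\alpha=q_0\succ q_1\succ\dots\succ q_r=pw_{\alpha+1}$ be the members of $PW^{(s)}$ lying weakly between them under $P_i^{(s)}$; by order preservation these are the same members under $P_i^{(s+1)}$. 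Counting the $r-1$ shared interior endpoints gives
$$\big|\segment{pw_\alpha}{pw_{\alpha+1}}{P}\big|=\Bigl(\sum_{j=0}^{r-1}\big|\segment{q_j}{q_{j+1}}{P}\big|\Bigr)-(r-1)$$
for $P\in\{P_i^{(s)},P_i^{(s+1)}\}$. Each $q_j,q_{j+1}$ are consecutive in $PW^{(s)}$, so the second bullet of Theorem~\ref{manipulation_features} makes every summand non-decreasing across the round, whence $\big|\segment{pw_\alpha}{pw_{\alpha+1}}{P}\big|$ is non-decreasing in $P$ along the round; composing over all rounds yields the weak monotone growth of each inter-winner segment of $PW$ along $P_i^{true}\to P_i^{t}\to P_i^{\tau}$.

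Finally, strict growth of the total. Summing the same additivity identity over all of $PW^{(s)}$ shows that on a manipulation round $s$ the total size of the inter-winner segments of $PW^{(s)}$ equals the span from its top to its bottom possible winner plus the constant $|PW^{(s)}|-2$; by the second bullet no segment shrinks and by the third bullet at least one grows, so this total strictly increases across the round (equivalently, at least one candidate is moved into the span of $PW^{(s)}$ from outside it). I expect the main obstacle to be the transfer of this strict increase from the set $PW^{(s)}$ live at that round to the fixed later set $PW=PW^{(\tau)}\subseteq PW^{(s)}$: a priori the enlarged inter-winner segment of $PW^{(s)}$ could lie between two possible winners that are both outside the $PW$-span (for instance above $pw_1$ under $P_i^{(s)}$), in which case the $PW$-total is so far only known to be non-decreasing. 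I would close this gap with the finer structural description of a minimal locally-dominant move extracted from the proof of Theorem~\ref{manipulation_features} in Section~\ref{subsec:full_theory}, which pins down between which pair of still-relevant possible winners the enlargement is forced; alternatively, reading ``these segments'' in the statement as the inter-winner segments of the possible-winner set current at each moment makes strict growth immediate from the third bullet on each manipulation round and weak monotonicity on the no-op rounds. Either reading, composed over all rounds, gives the corollary.
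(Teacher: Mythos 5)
Your round-by-round composition is essentially the paper's intended argument: the corollary is stated immediately after Theorem~\ref{manipulation_features}, the promised proof in Section~\ref{subsec:full_theory} never materialises as a separate proof environment, and the corollary is implicitly treated as exactly the induction over elicitation rounds that you carry out. For the first two assertions your write-up is correct and in fact more careful than the paper: the observation that $CL(\vecQ)$ only grows, hence $PW^{(\tau)}\subseteq PW^{(s)}$ for $s\le\tau$, together with the telescoping identity
$$\big|\segment{pw_\alpha}{pw_{\alpha+1}}{P}\big|=\Bigl(\sum_{j=0}^{r-1}\big|\segment{q_j}{q_{j+1}}{P}\big|\Bigr)-(r-1)$$
is precisely what is needed to transfer order-preservation and segment-non-shrinkage from the possible-winner set live at round $s$ to the later, smaller set $PW^{(\tau)}$, and each pair $q_j,q_{j+1}$ is indeed consecutive in $PW^{(s)}$, so the second bullet of the theorem applies summand by summand.

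The obstacle you flag for strict growth of the total is a genuine gap, and you should not expect to close it by the route you first propose (extracting finer structure from the lemmas): under the literal reading $PW=PW^{(\tau)}$ the claim can actually fail. Concretely, suppose at round $s$ the profile is $c_j\succ pw_1\succ pw_2\succ pw_3\succ c_k$ with $PW^{(s)}=\{pw_1,pw_2,pw_3\}$, and the manipulation produces $pw_1\succ c_k\succ c_j\succ pw_2\succ pw_3$; the enlargement falls entirely in $\segment{pw_1}{pw_{2}}{P_i}$, so if later $PW^{(\tau)}=\{pw_2,pw_3\}$ and voter $i$ is not queried again, the total of the $PW^{(\tau)}$-segments is unchanged between $t=s$ and $\tau$. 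The same failure occurs trivially whenever voter $i$ performs no manipulation between $t$ and $\tau$, since then $P_i^t=P_i^\tau$. So strictness can only hold per manipulation round and relative to the possible-winner set current at that round --- your second reading --- which is also the only sense in which the paper ever uses the corollary (Section~\ref{sec:model:}, where it is invoked to argue that the distortion of $P_i^{true}$ is controlled). State that reading explicitly, prove weak monotonicity for the fixed set $PW^{(\tau)}$ as you do, and confine the strict inequality to rounds where a manipulation is actually adopted; with that amendment your argument is complete.
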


Given Theorem~\ref{manipulation_features}, it is easy to confirm whether any of the two given preference profiles, $P,P'$, locally dominates the other. If we are able to compute the set $\mu(P_i)$, then testing its elements would result in finding a feasible manipulation for a voter. To achieve this, we devise a set of characterizations of the set $\mu(P_i)$ in Lemmata~\ref{lemma:general_manipulability}-\ref{lemma:num9}. In turn, this allows us to formulate Algorithm~\ref{algo:meta}, which scans the set of all preference profiles that may belong to $\mu(P_i)$, thus composing $\mu(P_i)$, and confirming that at least one of them locally dominates $P_i$. In particular, as Theorem~\ref{thm:algo_correct} formally confirms, Algorithm~\ref{algo:meta} finds a feasible manipulation, if one exists.

More formally, Algorithm~\ref{algo:meta} operates under the ``Common Givens'' w.l.o.g. assumptions on a query $(c_j,c_k)$, and returns a manipulative preference order, if one exists to satisfy Condition-1 and Condition-2. Otherwise, Algorithm~\ref{algo:meta} keeps the preference order unchanged. Hence, either a new preference order $P_i'$ is returned, where $P_i'(c_k,c_j)$ holds, or the current preference order $P_i$ is kept, where $P_i(c_j,c_k)$. In the former case of the pair the response $c_k\succ c_j$ will be added to $Q_i$, while in the latter -- the query response $c_j\succ c_k$ will be used.
Although formal analysis is necessary to fully appreciate the inner workings of our algorithm (as is done in Section~\ref{subsec:full_theory}), we present its pseudocode before its analysis. This is for the benefit of an engineering-oriented reader whose primary goal is its deployment, rather than further development.

\begin{algorithm}[!ht]
\caption{Voter Manipulation function: Meta-Algorithm \label{algo:meta}}
\begin{algorithmic}[1]
\Require
\Statex ``Common Givens'' w.l.o.g. assumptions
\Statex For query $(c_j,c_k)$ holds $P_i(c_j,c_k)$
\State Set $d^{abs}\leftarrow\infty$, $d^{loc}\leftarrow\infty$, $P^{loc}\leftarrow P_i$
\If {$c_j,c_k$ do {\bf not} satisfy Lemma~\ref{lemma:general_manipulability}} \label{algo:check:lemma:general_manipulability}
\State \Return $P_i$
\EndIf
\State Set $Z\gets \segment{c_j}{c_k}{P_i}$
\For {$z\in Z$ increasing w.r.t $P_i$}
\State $X_{good}\gets \segopen{\infty}{z}{P_i}\setminus\segright{c_j}{-\infty}{Q_i}$\label{algo:build:x_good}
\State $X_{bad}\gets  \segopen{\infty}{z}{P_i}\cap\segopen{c_j}{-\infty}{Q_i}$\label{algo:build:x_bad}
\State $Y_{good} \gets \segright{z}{-\infty}{P_i}\setminus\segleft{\infty}{c_k}{Q_i}$\label{algo:build:y_good}
\State $Y_{bad} \gets \segright{z}{-\infty}{P_i}\cap\segopen{\infty}{c_k}{Q_i}$\label{algo:build:y_bad}
\State Order $X_{good}$, $X_{bad}$, $Y_{good}$ and $Y_{bad}$ by $P_i$\label{algo:build:ensure_order}
\State Denote $P_i'$ the preference order $(X_{good},Y_{bad},c_k,c_j,X_{bad},Y_{good})$
\State $d = \kendal(P_i,P_i')$\label{algo:build:kendal}
\If {$d < d^{abs}$}
\State $d^{abs}=d$
\EndIf
\If {$P_i'$ is LD and $d<d^{loc}$}\label{algo:check:ld}
\State $d^{loc}=d$
\State $P^{loc}=P_i'$
\EndIf
\EndFor
\If {$d^{abs}<d^{loc}$}
\State \Return $P_i$
\Else
\State \Return $P^{loc}$
\EndIf
\end{algorithmic}
\end{algorithm}

\subsection{ILD Response Analysis }\label{subsec:full_theory}

Theorem~\ref{manipulation_features} has given a higher level structure to the set of possible manipulations in our model. It has allowed the reader to build intuition, comprehend our algorithm construction and understand their application examples. Now, it is possible to provide the detailed theoretical treatment, and in this section we will provide complete, rigorous definitions and proofs in support of our algorithmic design.

\begin{proof}[Proof: Theorem~\ref{manipulation_features}]
Let us assume that for some $l$ holds $pw_{l+1}\succ pw_l$. Consider a partial joint profile, $\mathcal{R}_{-i}$, where $\score(pw_l,\mathcal{R}_{-i})=\score(pw_{l+1},\mathcal{R}_{-i})=\eta$ and all other voters have a score of at most $\eta-m$. $\mathcal{R}_{-i}$ is a possible (partial) joint profile given the set $PW_i$. Now, if the voter $i$ submits $P_i$, then $pw_l$ will become the winner. If $P_i'$ is submitted, then $pw_{l+1}$ will win the elections. This contradicts the definition of dominance: $P_i'$ does not dominate $P_i$.

Let us now assume that for some $l$ holds $|\segment{pw_l}{pw_{l+1}}{P_i'}|< |\segment{pw_l}{pw_{l+1}}{P_i}|$. Similar to the previous case, construct a possible (partial) joint profile $\mathcal{R}_{-i}$ so that $\score(pw_{l+1},\mathcal{R}_{-i})-\score(pw_l,\mathcal{R}_{-i})=|\segment{pw_l}{pw_{l+1}}{P_i}|-\eta$. Where $\eta=1$ if $pw_l$ beats $pw_{l+1}$ in tie-breaking and $\eta=2$ otherwise. Furthermore, $\mathcal{R}_{-i}$ can be such that the score of all other candidates is at least $m$ points less than $\score(pw_l,\mathcal{R}_{-i})$. As before, if $P_i$ is submitted by the voter $i$, then $pw_l$ wins the elections, and if $P_i'$ is submitted, then $pw_{l+1}$ wins. Again, this contradicts $P_i'$ dominating $P_i$.

Lastly, assuming that $|\segment{pw_l}{pw_{l+1}}{P_i'}|\geq |\segment{pw_l}{pw_{l+1}}{P_i}|$ for all $l\in[1,...,k-1]$ holds, but there is {\em no} $l\in[1,...,k-1]$ so that $|\segment{pw_l}{pw_{l+1}}{P_i'}|\gneq |\segment{pw_l}{pw_{l+1}}{P}|$. In this case for any $\mathcal{R}=(\mathcal{R}_{-i},P_i)$ holds that $\winner(\mathcal{R})=\winner(\mathcal{R}_{-i},P_i')$. Hence $P_i'$ can not dominate $P_i$.

We conclude that all three conditions are necessary for $P_i'$ to dominate $P_i$. Furthermore, a simple reinspection of the proof quickly leads to the sufficiency of the conditions.
\end{proof}

\begin{theorem}\label{thm:algo_correct}
Assume that ''Common Givens'' w.l.o.g. conditions hold. Algorithm~\ref{algo:meta} works in polynomial time in the number of voters and candidates, and finds a $P_i'$ that satisfies Condition-1 and Condition-2, if such a preference profile exists.
\end{theorem}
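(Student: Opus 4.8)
The plan is to establish the two assertions of the theorem---polynomial running time and correctness---separately, using Theorem~\ref{manipulation_features} for the local-dominance test and the characterization Lemmata~\ref{lemma:general_manipulability}--\ref{lemma:num9} for the shape of $\mu(P_i)$.

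\textbf{Running time.} I would walk through Algorithm~\ref{algo:meta} line by line. The initial test of Lemma~\ref{lemma:general_manipulability} refers only to $P_i$, $Q_i$ and $PW_i$, all of size at most $m$, so it is polynomial. The main loop ranges over $Z=\segment{c_j}{c_k}{P_i}$, hence runs at most $m$ times. Each iteration performs a constant number of set operations on subsets of $C$, one sort by $P_i$, one assembly of $P_i'$, one swap-distance computation (an inversion count, $O(m\log m)$), and one local-dominance test; by Theorem~\ref{manipulation_features} the last amounts to checking that $P_i'$ preserves the order of $PW_i$ and that none of the $l-1$ inter-winner segments shrinks while at least one grows, costing $O(m^2)$. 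Thus each iteration is polynomial in $m$, and the algorithm runs in time polynomial in $m$, hence in $n+m$.

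\textbf{Correctness.} I would argue in three steps. (i) Every $P_i'(z)$ built in line~12 is a genuine linear order with $P_i'(z)\models Q_i$ and $P_i'(z)(c_k,c_j)$, i.e.\ a valid query answer that flips $c_j$ and $c_k$: $c_k$ precedes $c_j$ by construction, while $X_{bad}$ and $Y_{bad}$ are exactly the candidates whose $Q_i$-position that flip would otherwise violate, so---using $P_i\models Q_i$ and the Common Givens assumption that neither $Q_i(c_j,c_k)$ nor $Q_i(c_k,c_j)$ holds---the assembled order extends $Q_i\cup\{(c_k,c_j)\}$ (the degenerate pivots $z\in\{c_j,c_k\}$ are handled by deleting the duplicated pivot). (ii) By Lemmata~\ref{lemma:general_manipulability}--\ref{lemma:num9}, $\mu(P_i)$---the $Q_i\cup\{(c_k,c_j)\}$-consistent orders at minimum swap distance from $P_i$---is contained in the family $\{P_i'(z):z\in Z\}$; together with (i) and with $\mu(P_i)\neq\emptyset$ (since $CL(Q_i\cup\{(c_k,c_j)\})$ is acyclic, again by the Common Givens), this yields $d^{abs}=\min_{z\in Z}\kendal(P_i,P_i'(z))=\min_{P\models Q_i\cup\{(c_k,c_j)\}}\kendal(P_i,P)$ and $\mu(P_i)=\{P_i'(z):\kendal(P_i,P_i'(z))=d^{abs}\}$. (iii) The output now follows by inspection: Theorem~\ref{manipulation_features} makes the test in line~17 exact, so $P^{loc}$ is a locally dominant member of the family at the least distance $d^{loc}$ among such members, and $d^{abs}\le d^{loc}$ always. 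If $d^{abs}<d^{loc}$, no element of $\mu(P_i)$ is locally dominant, hence---$\mu(P_i)$ being precisely the orders meeting Condition-2 that answer the query with $c_k\succ c_j$---no profile satisfies both conditions and returning $P_i$ is correct. If $d^{abs}=d^{loc}$, then $P^{loc}$ is locally dominant (Condition-1) and sits at distance $d^{abs}$, hence in $\mu(P_i)$ (Condition-2), so returning $P^{loc}$ is correct. And if $(c_j,c_k)$ fails Lemma~\ref{lemma:general_manipulability}, that lemma says no locally dominant valid response exists, so returning $P_i$ is again correct. Hence the algorithm outputs a profile satisfying Condition-1 and Condition-2 whenever one exists, and $P_i$ otherwise.

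\textbf{Main obstacle.} The load-bearing claim is step (ii): that the one-parameter family $\{P_i'(z)\}_{z\in Z}$ already contains every minimum-distance valid response. This is what Lemmata~\ref{lemma:general_manipulability}--\ref{lemma:num9} must supply, and the crux is to show that any distance-minimizing way of flipping the pair so that $c_k\succ c_j$ while respecting $Q_i$ reorders a pair of candidates only when forced to by $Q_i$-transitivity, so that the outcome is pinned down by the single position $z$ at which the $c_j,c_k$ block is re-inserted into the otherwise unchanged order. Everything else---the complexity bound, the exactness of the local-dominance test via Theorem~\ref{manipulation_features}, and the short case split above---is routine once that structural fact is in hand.
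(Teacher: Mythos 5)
Your overall architecture is the same as the paper's: show that the loop over $z\in Z$ enumerates a family of consistent orders containing all of $\mu(P_i)$, use Theorem~\ref{manipulation_features} to make the local-dominance test in the loop exact, and read off the answer from the comparison of $d^{abs}$ and $d^{loc}$. The running-time analysis and the case split in your step (iii) match the paper essentially verbatim, and your step (i) (that each assembled $(X_{good},Y_{bad},c_k,c_j,X_{bad},Y_{good})$ is a linear order extending $Q_i\cup\{(c_k,c_j)\}$) is a point the paper leaves implicit.

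However, your step (ii) --- the containment $\mu(P_i)\subseteq\{P_i'(z):z\in Z\}$ --- is not something Lemmata~\ref{lemma:general_manipulability}--\ref{lemma:num9} supply by citation; it is the substance of the paper's proof, and you have flagged it as the main obstacle without closing it. The lemmata only give local structural facts about an arbitrary $P_i'\in\mu(P_i)$: $c_k$ and $c_j$ end up adjacent (Lemma~\ref{lemma:swap_is_close}), the portions above $c_k$ and below $c_j$ inherit their order from $P_i$ (Lemma~\ref{lemma:num3}), prefixes and suffixes outside $\segment{c_j}{c_k}{P_i}$ are preserved as sets (Lemma~\ref{lemma:num7}), adjacent pairs are separated only to admit the $c_k,c_j$ block (Lemma~\ref{lemma:num8}), and the elements displaced past $c_k$ or $c_j$ are exactly those forced by $Q_i$ (Lemma~\ref{lemma:num9}). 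To get from these to ``there is a single pivot $z\in\segment{c_j}{c_k}{P_i}$ with $P_i'=(X_{good},Y_{bad},c_k,c_j,X_{bad},Y_{good})$'', the paper must still (a) choose the right witnesses --- $a_1$ the $P_i'$-least candidate above $c_k$ with $(a_1,c_k)\not\in Q_i$, $b_2$ the $P_i'$-greatest below $c_j$ with $(c_j,b_2)\not\in Q_i$, and $a_2,b_1$ their $P_i$-neighbours --- (b) decompose $P_i'$ as $(F_j,G_k,c_k,c_j,G_j,F_k)$, and (c) run a three-way sub-case analysis ($a_2=b_2$; $P_i'(a_2,b_2)$; both gaps $\segopen{a_1}{a_2}{P_i'}$ and $\segopen{b_1}{b_2}{P_i'}$ empty) verifying that each block coincides with the corresponding $X/Y$ set computed at the pivot $z=a_2$ (with $b_2$ replaced by a surrogate $d_2$ in the last sub-case). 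None of this is routine: identifying which element of $Z$ serves as the pivot, and showing that the ``bad'' blocks $G_j,G_k$ land exactly where the algorithm places $X_{bad},Y_{bad}$, requires applying Lemma~\ref{lemma:num9} at that specific pivot. So your proposal is a correct skeleton of the paper's argument, but the step you isolate as the crux is precisely the part that remains to be proved.
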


Before we prove Theorem~\ref{thm:algo_correct}, i.e. the correctness of Algorithm~\ref{algo:meta}, we provide a set of lemmas that are needed for supporting the proof. All lemmas adopt the ``Common Givens'' w.l.o.g. assumptions mentioned in section 4.1.
Let us examine voter $v_i$'s profile. The preferred order of possible winners according to $v_i$ is: $P_i=[pw_1,pw_2,...,pw_l$]. When $v_i$ is queried for her preference between $c_j$ and $c_k$, her response is: $c_j\succ c_k$. We would like to build a new profile $P_i'$ where voter $v_i$'s response to the same query is: $c_k\succ c_j$. We need $P_i'$ to satisfy conditions Condition-1 and Condition-2, i.e. $P_i'$ should be a local dominant profile over $P_i$ and also have the minimal swap distance to $P_i$ out of all possible profiles.



The only way to create a profile $P_i'$, that is local dominant and has a minimal swap distance, is if in profile $P_i$:
\begin{itemize}
\item $c_j$ is above $pw_1$ and $c_k$ is below $pw_l$: 
$P_i\ \ :\ \ \dots\succ c_j\succ\dots\succ pw_1\succ\dots\succ pw_l\succ\dots\succ c_k\succ\dots$
\item $c_j$ is between $pw_1$ and $pw_l$ and $c_k$ is below $pw_l$:
$P_i\ \ :\ \ \dots\succ pw_1\succ\dots\succ c_j\succ\dots\succ pw_l\succ\dots\succ c_k\succ\dots$
\item $c_j$ is above $pw_1$ and $c_k$ is between $pw_1$ and $pw_l$:
$P_i\ \ :\ \ \dots\succ c_j\succ\dots\succ pw_1\succ\dots\succ c_k\succ\dots\succ pw_l\succ\dots$
\end{itemize}

As an example of the latter case, if $P_i=[c_j,...,pw_1,...,c_k,...pw_l]$ then switching between $c_j$ and $c_k$ by adding $c_j$ to the sequence that is below $pw_1$ and above $pw_l$: $P_i'= [pw_1,...,c_k,c_j ,..., pw_l]$ results in a profile $P_i'$ that is a local dominant with a minimal swap distance, i.e. satisfies both conditions Condition-1 and Condition-2.

Let us consider the alternatives. If in profile $P_i$ both $c_j$ and $c_k$ are either:
\begin{itemize}
\item  below $pw_1$ and above $pw_l$: 
$P_i\ \ :\ \ \dots\succ pw_1\succ\dots\succ c_j\succ\dots\succ c_k\succ\dots\succ pw_l\succ\dots$
\item below $pw_l$:
$P_i\ \ :\ \ \dots\succ pw_1\succ\dots\succ pw_l\succ\dots\succ c_j\succ\dots\succ c_k\succ\dots$
\item above $pw_1$:
$P_i\ \ :\ \ \dots\succ c_j\succ\dots\succ c_k\succ\dots\succ pw_1\succ\dots\succ pw_l\succ\dots$
\end{itemize}

Then for $P_i'$ to be a local dominant profile over $P_i$, the total distance between $pw_1$ and $pw_l$ should increase with respect to the total distance between $pw_1$ and $pw_l$ in profile $P_i$. Therefore, in these cases one must not only switch between $c_j$ and $c_k$ but must also insert at least one candidate between $pw_1$ and $pw_l$ so that the total distance is increased. However, inserting a candidate between $pw_1$ and $pw_l$ results in a profile $P_i'$ that is local dominant but does not have a minimal swap distance. 

Formally, the above can be expressed as:

\begin{lemma}\label{lemma:general_manipulability}
Assume that there is $P_i'\neq P_i$ that satisfies Condition-1 and Condition-2. Then either of the following holds:
\begin{itemize}
\item $P_i(c_j,pw_1)$ and $P_i(pw_l,c_k)$;
\item $P_i(c_j,pw_1)$ and $c_k\in\segment{pw_1}{pw_l}{P_i}$;
\item $P_i(pw_l,c_k)$ and $c_j\in\segment{pw_1}{pw_l}{P_i}$.
\end{itemize}
\end{lemma}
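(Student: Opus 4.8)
The plan is to prove the contrapositive: if $c_j$ and $c_k$ fall in the same block of the partition of $P_i$ cut out by the possible winners --- both in $\segopen{\infty}{pw_1}{P_i}$, or both in $\segopen{pw_l}{-\infty}{P_i}$, or both in $\segment{pw_1}{pw_l}{P_i}$ --- then no $P_i'\neq P_i$ can satisfy Condition-1 and Condition-2 together. The structural input is Theorem~\ref{manipulation_features}: a locally dominant $P_i'$ keeps $pw_1,\dots,pw_l$ in the same order, shrinks no segment $\segment{pw_\alpha}{pw_{\alpha+1}}{\cdot}$, and strictly enlarges at least one of them. Since $\sum_{\alpha=1}^{l-1}|\segment{pw_\alpha}{pw_{\alpha+1}}{\cdot}|$ overcounts $\segment{pw_1}{pw_l}{\cdot}$ by the constant $l-2$, the last property is equivalent to $|\segment{pw_1}{pw_l}{P_i'}|>|\segment{pw_1}{pw_l}{P_i}|$; hence some candidate $w$ with $w\succ_{P_i}pw_1$ or $pw_l\succ_{P_i}w$ must move into $\segment{pw_1}{pw_l}{P_i'}$. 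The argument will show that hosting such a ``newcomer'' costs one swap more than the cheapest consistent reversal of the queried pair, which contradicts Condition-2.

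First I would pin down that cheapest reversal. The Common Givens forbid $Q_i(c_j,c_k)$ and $Q_i(c_k,c_j)$, and no candidate $z$ strictly between $c_j$ and $c_k$ in $P_i$ can remain strictly between them once $c_k\succ c_j$ (that would close the cycle $c_k\succ c_j\succ z\succ c_k$). So every $P\models Q_i\cup\{(c_k,c_j)\}$ already inverts the pair $(c_j,c_k)$ and, for each of the $|\segment{c_j}{c_k}{P_i}|-2$ intermediate candidates $z$, at least one of $(z,c_j)$ or $(z,c_k)$; thus $\kendal(P_i,P)\geq|\segment{c_j}{c_k}{P_i}|-1$. This bound is attained: reorder only the block $\segment{c_j}{c_k}{P_i}$, sending the candidates $Q_i$-forced below $c_j$ under $c_k$ and those $Q_i$-forced above $c_k$ over $c_j$, keeping everything else as in $P_i$ --- the exclusion of $Q_i(c_j,c_k)$ is precisely what makes these two groups disjoint, so no further inversion is forced. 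Hence $\min_{P\models Q_i\cup\{(c_k,c_j)\}}\kendal(P_i,P)=|\segment{c_j}{c_k}{P_i}|-1$, and Condition-2 forces any admissible $P_i'$ to sit at exactly this distance from $P_i$.

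Now take $P_i'$ satisfying both conditions and let $w$ be the newcomer identified in the first paragraph; say w.l.o.g. $w\succ_{P_i}pw_1$, so that $pw_1\succ_{P_i'}w$ (note $w\ne pw_1$, a possible winner always lying inside its own span) and $(w,pw_1)$ is an inversion of $P_i'$ relative to $P_i$; the case $pw_l\succ_{P_i}w$ is symmetric, using $pw_l$. A short case check over the three blocks shows that $(w,pw_1)$ is none of the $|\segment{c_j}{c_k}{P_i}|-1$ unavoidable inversions above: it cannot equal $(c_j,c_k)$, nor a $(z,c_j)$ or $(z,c_k)$ with $z$ strictly between $c_j$ and $c_k$, because each such identification would either place $pw_1$ strictly between $c_j$ and $c_k$ in $P_i$ (impossible when $c_j$ and $c_k$ share a block) or force $w\in\segment{pw_1}{pw_l}{P_i}$, contradicting $w\succ_{P_i}pw_1$. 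Therefore $\kendal(P_i,P_i')\geq|\segment{c_j}{c_k}{P_i}|$, contradicting the previous paragraph, so no such $P_i'$ exists. For the two blocks outside the span one can even skip the case check: there $\segment{c_j}{c_k}{P_i}$ is disjoint from $\{pw_1,\dots,pw_l\}$ and from $\segment{pw_1}{pw_l}{P_i}$, so every distance-minimal reversal rearranges only candidates that no $pw_\alpha$ separates, leaving every segment $\segment{pw_\alpha}{pw_{\alpha+1}}{\cdot}$ literally unchanged --- hence not locally dominant by Theorem~\ref{manipulation_features}.

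I expect the main obstacle to be establishing the \emph{exact} minimal reversal distance in the presence of the partial order $Q_i$: the lower bound is the short cycle argument, but the matching upper bound requires exhibiting a consistent reversal that introduces no inversion beyond the unavoidable ones, and this is exactly where the Common Givens assumption $\neg Q_i(c_j,c_k)$ does the real work --- it is what rules out a $Q_i$-chain pinning some intermediate candidate simultaneously above and below the queried pair. The later lemmas characterising $\mu(P_i)$ carry out this bookkeeping in full; for the present statement it suffices to combine the counting bound above with the structural half supplied by Theorem~\ref{manipulation_features}.
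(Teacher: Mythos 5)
Your reduction to the contrapositive and your use of Theorem~\ref{manipulation_features} to extract a ``newcomer'' entering the span $\segment{pw_1}{pw_l}{\cdot}$ are fine, but the argument hinges on the claim that $\min_{P\models Q_i\cup\{(c_k,c_j)\}}\kendal(P_i,P)=|\segment{c_j}{c_k}{P_i}|-1$, and that claim is false in general under the Common Givens. The lower bound is correct, but the matching upper bound fails: the closure $CL\left(Q_i\cup\{(c_k,c_j)\}\right)$ can force inversions of pairs involving \emph{neither} $c_j$ nor $c_k$, which your count omits. Concretely, take two intermediate candidates with $P_i: c_j\succ w\succ u\succ c_k$ and $Q_i\supseteq\{(c_j,w),(u,c_k)\}$. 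Neither $Q_i(c_j,c_k)$ nor $Q_i(c_k,c_j)$ is implied (your ``pinning'' diagnosis only excludes a \emph{single} candidate committed both above $c_k$ and below $c_j$, not two distinct candidates committed on opposite sides in the wrong $P_i$-order), yet the closure forces $u\succ c_k\succ c_j\succ w$, so the pair $(w,u)$ must also invert and the true minimum is $|\segment{c_j}{c_k}{P_i}|$, not $|\segment{c_j}{c_k}{P_i}|-1$. Once the baseline is wrong, your final contradiction --- ``the newcomer inversion $(w,pw_1)$ pushes the distance to $|\segment{c_j}{c_k}{P_i}|$, exceeding the minimum'' --- no longer yields a contradiction, and the same defect undermines the shortcut for the two outer blocks, since it presumes that distance-minimal reversals rearrange only $\segment{c_j}{c_k}{P_i}$ as described by your explicit construction.

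The paper's proof avoids any global distance computation. It takes the topmost candidate $c$ (w.r.t.\ $P_i'$) newly pulled into the span, looks at its immediate predecessor $c'$ in $P_i'$, checks that $(c',c)\neq(c_k,c_j)$ in each contrary sub-case and that $P_i(c,c')$, and then swaps the adjacent pair back to obtain a consistent profile strictly closer to $P_i$ --- contradicting Condition-2 directly, whatever the actual minimum is. To repair your argument you would need either this local-swap step or the structural fact (proved later as Lemma~\ref{lemma:num7}) that minimal reversals leave everything outside $\segment{c_j}{c_k}{P_i}$ in place; the exact-distance formula cannot be salvaged as stated.
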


\begin{proof}[Proof: Lemma~\ref{lemma:general_manipulability}]
Let us assume the contrary, i.e. that, in addition to Condition-1 and Condition-2, either of the following holds :
\begin{itemize}
\item $c_j,c_k\in\segment{pw_1}{pw_l}{P_i}$
\item $c_j,c_k\in\segleft{+\infty}{pw_l}{P_i}$
\item $c_j,c_k\in\segright{pw_l}{-\infty}{P_i}$
\end{itemize}

Because Condition-1 holds for $P_i'$, i.e. $P_i'$ locally dominates $P_i$, it follows from Theorem~\ref{manipulation_features} that $$\left|\segment{pw_1}{pw_l}{P_i}\right|<|\segment{pw_1}{pw_l}{P_i'}|.$$
Hence, there is a candidate $c\in C$ so that either $P_i(c,pw_1)$ and $P_i'(pw_1,c)$, or $P_i(pw_l, c)$ and $P_i'(c,pw_l)$. Due to the symmetry of these two cases, let us assume without loss of generality that the former case holds, i.e. $P_i(c,pw_1)$ and $P_i'(pw_1,c)$. Let us assume that $c$ is the highest candidate for which this condition holds with respect to $P_i'$. Formally:
\begin{equation}
\forall\widehat{c}\neq c\ \ P_i(\widehat{c},pw_1) \land P_i'(pw_1,\widehat{c}) \Rightarrow P_i'(c,\widehat{c})\label{lemma:general_manipulability:eqn1}
\end{equation}

Let $c'$ be the candidate immediately above $c$ w.r.t $P_i'$, i.e. $P_i'(c',c)$ and the segment $\segopen{c'}{c}{P_i'}=\emptyset$. Let us show that the candidate pair $(c',c)\neq(c_k,c_j)$, in each of the contrary sub-cases:
\begin{itemize}
\item If $c_j,c_k\in\segment{pw_1}{pw_l}{P_i}$, then $c\neq c_j$ and $c\neq c_k$ since $c\in\segleft{+\infty}{pw_1}{P_i}$.
\item If $c_j,c_k\in\segleft{+\infty}{pw_1}{P_i}$, then $c\neq c_j$, otherwise we obtain contradiction to the Equation~\ref{lemma:general_manipulability:eqn1}, because $P_i'(c_k,c_j)$.
\item If $c_j,c_k\in\segright{pw_l}{-\infty}{P_i}$, then $c\neq c_j$ and $c\neq c_k$ since $c\in\segleft{+\infty}{pw_1}{P_i}$.
\end{itemize}

Furthermore, $c$ and $c'$ are such that $P_i(c,c')$. Otherwise we again obtain contradiction to Equation~\ref{lemma:general_manipulability:eqn1}, since by the choice of $c$ and $c'$ holds that $P_i(c,pw_1)$ and $\left|\segleft{pw_1}{c'}{P_i}\right|\geq 1$ (i.e. $c'$ is either $pw_1$ or below it).

Let us then consider $P_i''$ obtained from $P_i'$ by swapping $c$ and $c'$. It is easy to see that $\kendal(P_i,P_i'')\lneq\kendal(P_i,P_i')$, yet $P_i''\models CL(Q_i\cup{(c_k,c_j)})$. This contradicts the assumption that Condition-2 holds for $P_i'$.

\end{proof}

As before, let us assume that in $P_i$, $c_j\succ c_k$. In $P_i'$ the order of these two candidates is switched so that $c_k\succ c_j$. Let us denote the set of all profiles that have a minimal swap distance from $P_i$ as $\mu(P_i)$.
In order for $P_i' \in\mu(P_i)$, i.e, in order for $P_i'$ to have a minimal swap distance from $P_i$, $c_k$ and $c_j$ need to be ordered directly one after the other,
with no other candidates separating them. Formally:

\begin{lemma}\label{lemma:swap_is_close}
Let $(c_j,c_k)$ be the query, and let there be $c$ so that $P_i'(c_k,c)$ and $P_i'(c,c_j)$, i.e. $(c_k; P_i'; c_j)\neq\emptyset$, then $P_i'\not\in\mu(P_i)$.
\end{lemma}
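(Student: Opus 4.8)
\noindent\emph{Proof proposal (Lemma~\ref{lemma:swap_is_close}).}
The plan is to prove the contrapositive by an exchange (adjacent transposition) argument: starting from any $P_i'$ that orders some $c$ strictly between $c_k$ and $c_j$, i.e. $\segopen{c_k}{c_j}{P_i'}\neq\emptyset$, I will build a profile $P_i''$ with $P_i''\models CL(Q_i\cup\{(c_k,c_j)\})$ and $\kendal(P_i,P_i'')=\kendal(P_i,P_i')-1$, which shows $P_i'$ cannot attain the minimum defining $\mu(P_i)$. We may assume at the outset that $P_i'\models Q_i\cup\{(c_k,c_j)\}$ (so in particular $P_i'(c_k,c_j)$ by transitivity), since otherwise $P_i'\notin\mu(P_i)$ is immediate.

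\textbf{Step 1: locate a correctable adjacent inversion inside the segment.} List the candidates of $\segment{c_k}{c_j}{P_i'}$ in $P_i'$-order as $e_0=c_k,e_1,\dots,e_r,e_{r+1}=c_j$; the hypothesis gives $r\geq 1$. By the ``Common Givens'' $P_i(c_j,c_k)$, so $e_{r+1}$ is preferred to $e_0$ in $P_i$; therefore, walking down the chain, there must be an index $t$ with $e_{t+1}$ preferred to $e_t$ in $P_i$ (otherwise $e_0$ would be preferred to $e_{r+1}$ in $P_i$). Set $x=e_t$, $y=e_{t+1}$: then $x,y$ are consecutive in $P_i'$ (so an adjacent transposition is available), $P_i'(x,y)$, and $P_i(y,x)$. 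Since $r\geq 1$, the pair $(x,y)$ is \emph{not} the ``manipulation pair'' $(c_k,c_j)$.

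\textbf{Step 2: the transposition stays feasible.} Let $P_i''$ be $P_i'$ with $x$ and $y$ swapped. As they are adjacent in $P_i'$, the only ordered pair on which $P_i''$ and $P_i'$ differ is $\{x,y\}$; in particular $P_i''$ still ranks $c_k$ above $c_j$. It remains to verify $P_i''\models CL(Q_i\cup\{(c_k,c_j)\})$, i.e. that this closure does not force $x\succ y$. It cannot be forced by $Q_i$ alone, since $P_i\models Q_i$ while $P_i(y,x)$. Adjoining the new relation $c_k\succ c_j$ could create a forced $x\succ y$ only if $Q_i$ already forces $x\succeq c_k$ and $c_j\succeq y$; but $P_i'$ orders the segment as $c_k\succeq x\succ y\succeq c_j$ and $P_i'\models Q_i$, so $Q_i$ forcing $x\succeq c_k$ would compel $x=c_k$ and $Q_i$ forcing $c_j\succeq y$ would compel $y=c_j$, i.e. $(x,y)=(c_k,c_j)$ --- excluded in Step~1. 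Hence $x\succ y$ is not forced and $P_i''$ is a legitimate extension of $CL(Q_i\cup\{(c_k,c_j)\})$.

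\textbf{Step 3: distance strictly decreases, and anticipated obstacle.} The unique candidate pair on which $P_i''$ disagrees with $P_i'$ is $\{x,y\}$, and there $P_i''$ now agrees with $P_i$ (both put $y$ above $x$) whereas $P_i'$ disagreed; all other pairs keep their orientation, hence their agreement status with $P_i$. Thus $\kendal(P_i,P_i'')=\kendal(P_i,P_i')-1$, contradicting minimality over $\{P\models Q_i\cup\{(c_k,c_j)\}\}$, so $P_i'\notin\mu(P_i)$. The one genuinely delicate point is Step~2: guaranteeing the adjacent swap does not collide with a committed preference in $Q_i$ or with one newly implied once $c_k\succ c_j$ is added. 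This rests on the fact that both $x$ and $y$ lie inside $\segment{c_k}{c_j}{P_i'}$, so $P_i'\models Q_i$ already pins their order relative to $c_k$ and $c_j$; granting that, Step~1 is a short monotonicity argument and Step~3 is routine bookkeeping.
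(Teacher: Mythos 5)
Your proof is correct and follows essentially the same route as the paper's: locate an adjacent pair inside $\segment{c_k}{c_j}{P_i'}$ that is inverted relative to $P_i$, observe it cannot be $(c_k,c_j)$ because the open segment is nonempty, and swap it to strictly decrease the swap distance while preserving consistency. Your Steps 1 and 2 merely spell out the existence of that inversion and the closure-consistency check, which the paper's proof asserts as "easy to see."
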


\begin{proof}[Proof: Lemma~\ref{lemma:swap_is_close}]
Let us have a closer look at the closed interval $[c_k; P_i'; c_j]$. There is a pair of candidates $(c,c')\in[c_k; P_i';c_j]$, so that $P_i(c',c)$ and $(c; P_i', c')=\emptyset$. Because $(c_k; P_i'; c_j)\neq\emptyset$, it holds that $(c,c')\neq(c_k,c_j)$. Let $P_i''$ be a preference order obtained from $P_i'$ by swapping $c$ and $c'$. It is easy to see that $P_i''\models Q_i$ and $\kendal(P_i,P_i'')\lneq\kendal(P_i,P_i')$. I.e. $P_i'\not\in\mu(P_i)$.
\end{proof}


Besides the proximity of $c_j,c_j$, we can also show that certain sets of elements remain in their original order. In particular, the following lemma show that two sub-sets of elements, those with the closest consistent $P_i'\in\mu(P_i)$ places either above $c_k$ or below $c_j$, inherit their relative order from $P_i$.


\begin{lemma}\label{lemma:num3}
Let $P_i'\in\mu(P_i)$, then the following two equations hold
\begin{eqnarray}
P_i'\downarrow_{\segleft{\infty}{c_k}{P_i'}}&=&P_i\downarrow_{\segleft{\infty}{c_k}{P_i'}}\label{lemma:num3:eqn1}\\
P_i'\downarrow_{\segright{c_j}{-\infty}{P_i'}}&=&P_i\downarrow_{\segright{c_j}{-\infty}{P_i'}}\label{lemma:num3:eqn2}
\end{eqnarray}
\end{lemma}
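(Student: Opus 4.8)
The plan is to prove each of the two equations by the same exchange-argument template already used in Lemmata~\ref{lemma:general_manipulability} and~\ref{lemma:swap_is_close}: assume that the claimed equality of projections fails, extract an inversion, and correct it by a single swap that strictly decreases the swap distance to $P_i$ while preserving consistency with $CL(Q_i\cup\{(c_k,c_j)\})$, contradicting $P_i'\in\mu(P_i)$. I will only write out the argument for Equation~\ref{lemma:num3:eqn1}, i.e. the set $T=\segleft{\infty}{c_k}{P_i'}$ of candidates that $P_i'$ ranks strictly above $c_k$ (inclusive of $c_k$ itself, per the notation); Equation~\ref{lemma:num3:eqn2} is symmetric under reversing the order and swapping the roles of $c_j$ and $c_k$.

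First I would observe that $P_i'\!\downarrow_T$ and $P_i\!\downarrow_T$ are two linear orders on the \emph{same} finite set $T$, so if they differ there must be a pair $c,c'\in T$ with $P_i'(c',c)$ but $P_i(c,c')$; moreover, by choosing such an inverted pair that is \emph{adjacent in $P_i'$} (i.e. $\segopen{c'}{c}{P_i'}=\emptyset$ — this exists by the standard bubble-sort observation that if two linear orders on a common set differ then some $P_i'$-consecutive pair is inverted), I get a candidate swap that changes $\kendal(P_i,\cdot)$ by exactly one. Let $P_i''$ be obtained from $P_i'$ by transposing $c$ and $c'$; since $P_i(c,c')$, this transposition removes one inversion relative to $P_i$, so $\kendal(P_i,P_i'')=\kendal(P_i,P_i')-1$. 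It remains to check $P_i''\models CL(Q_i\cup\{(c_k,c_j)\})$, which follows once I verify two things: (a) the swapped pair is not $(c_k,c_j)$ — true because both $c,c'$ lie strictly above $c_k$ in $P_i'$ while $c_j$ lies below $c_k$ in $P_i'$ (recall $P_i'(c_k,c_j)$), so neither of $c,c'$ equals $c_j$; and (b) transposing two $P_i'$-adjacent candidates preserves every pairwise relation in $CL(Q_i)$ unless that relation is exactly the one being flipped, which by (a) it is not, because $Q_i$ contains neither $(c_j,c_k)$ nor $(c_k,c_j)$ by the Common Givens and $P_i''$ agrees with $P_i'$ on all other pairs. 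Hence $P_i''$ is a feasible profile strictly closer to $P_i$, contradicting $P_i'\in\mu(P_i)$.

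The one subtlety — which I regard as the main obstacle, though it is mild — is the interaction between the locality of the swap and the membership of $c$ and $c'$ in the interval $T$: a priori, transposing $c$ and $c'$ inside $P_i'$ could eject $c'$ from, or draw a new element into, the set ``strictly above $c_k$'', so one must be slightly careful that the pair witnessing the inversion is genuinely a pair of candidates both sitting above $c_k$ and adjacent there. This is handled by taking $c,c'$ adjacent \emph{within $P_i'\!\downarrow_T$} rather than within $P_i'$; since $c_k=\min T$ w.r.t.\ $P_i'$, any two elements adjacent in $P_i'\!\downarrow_T$ are also adjacent in $P_i'$ itself (no candidate of $T$ lies between them, and no candidate outside $T$ can lie between two elements both above $c_k$), so the single-swap bookkeeping from the previous paragraph goes through verbatim. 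With that observation in place, the contradiction is immediate, and the symmetric argument on $\segright{c_j}{-\infty}{P_i'}$ — using $c_j=\max$ of that interval w.r.t.\ $P_i'$ and the fact that $c_k$ lies above $c_j$ — closes the proof.
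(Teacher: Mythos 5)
Your proof is correct and follows essentially the same route as the paper's: locate a $P_i'$-adjacent inverted pair $c,c'$ with $P_i(c,c')$ inside the relevant interval, swap it to obtain a profile strictly closer to $P_i$ that still models $CL\left(Q_i\cup\{(c_k,c_j)\}\right)$, and contradict $P_i'\in\mu(P_i)$; your care about adjacency within the projection versus adjacency in $P_i'$ is a point the paper glosses over. The one clause worth adding to your consistency check is that $(c',c)\notin CL(Q_i)$ precisely because $P_i\models Q_i$ and $P_i(c,c')$ --- that, rather than item (a), is what licenses flipping the pair.
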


\begin{proof}[Proof: Lemma~\ref{lemma:num3}]
Let us assume that the Equation~\ref{lemma:num3:eqn1} does not hold. Then, there are two candidates, $c,c'$ so that $(c';P_i';c)=\emptyset$, $P_i'(c',c)$ and $P_i(c,c')$. Furthermore, it holds that $P_i'(c,c_k)$. Let us define a new preference order $P_i''$ obtained from $P_i'$ by swapping $c$ and $c'$. Then $\kendal(P_i,P_i'')<\kendal(P_i,P_i')$ and $P_i''\models CL\left(Q_i\cup\{(c_k,c_j)\}\right)$, i.e. $P_i'\not\in\mu(P_i)$, contradicting the lemma's premise.

We obtain the same kind of contradiction by assuming that Equation~\ref{lemma:num3:eqn2} does not hold. Hence the Lemma's conclusion: both Equation~\ref{lemma:num3:eqn1} and~\ref{lemma:num3:eqn2} must hold.
\end{proof}

Furthermore, if we consider two candidates that the original preference order $P_i$ places outside the span between $c_j$ and $c_k$, then they demarcate an upper and a lower candidate intervals that maintain both their order {\em and} composition in $P_i'$.

\begin{lemma}\label{lemma:num7}
Let $P_i'\in\mu(P_i)$, and let $c_j',c_k'\in C$ so that $P_i(c_j',c_j)$ and $P_i(c_k,c_k')$. Then the following equations hold
\begin{eqnarray}
\segleft{\infty}{c_j'}{P_i}&=&\segleft{\infty}{c_j'}{P_i'}\label{lemma:num7:eqn1}\\
\segright{c_k'}{-\infty}{P_i}&=&\segright{c_k'}{-\infty}{P_i'}.\label{lemma:num7:eqn2}
\end{eqnarray}
\end{lemma}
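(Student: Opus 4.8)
The plan is to prove Equation~\ref{lemma:num7:eqn1} by contradiction, via the same adjacent-transposition exchange argument used for Lemma~\ref{lemma:swap_is_close} and Lemma~\ref{lemma:num3}; Equation~\ref{lemma:num7:eqn2} then follows by the top-to-bottom, $c_j\leftrightarrow c_k$ symmetry of the ``Common Givens'' setup. So suppose $\segleft{\infty}{c_j'}{P_i}\neq\segleft{\infty}{c_j'}{P_i'}$ for some $c_j'$ with $P_i(c_j',c_j)$. Both sets contain $c_j'$, so their symmetric difference contains some candidate $c\neq c_j'$, and there are two cases: (A) $P_i(c,c_j')$ but $P_i'(c_j',c)$, i.e.\ $c$ is ranked above $c_j'$ by $P_i$ but $P_i'$ pushed it below $c_j'$; or (B) $P_i(c_j',c)$ but $P_i'(c,c_j')$, i.e.\ $c$ was pushed above $c_j'$. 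In case (A) I would take $c$ to be the highest such candidate w.r.t.\ $P_i'$ and $c'$ its immediate predecessor in $P_i'$; in case (B) I would take $c$ lowest in $P_i'$ and $c'$ its immediate successor. Let $P_i''$ be obtained from $P_i'$ by transposing the adjacent pair $\{c,c'\}$.

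The first step is to check that the transposition strictly decreases the swap distance, i.e.\ $\kendal(P_i,P_i'')=\kendal(P_i,P_i')-1$; this amounts to showing $P_i$ and $P_i'$ order $\{c,c'\}$ oppositely, namely $P_i(c,c')$ in case (A) and $P_i(c',c)$ in case (B). This holds because, by construction, $c'$ is either $c_j'$ itself --- in which case the claim is exactly the defining inequality of the case --- or lies strictly on the $c_j'$-side of $c$ in $P_i'$, and then the extremal choice of $c$ forces $c'$ to sit on the opposite side of $c_j'$ in $P_i$ (otherwise $c'$ would itself be an even higher, resp.\ lower, witness), so the desired order between $c$ and $c'$ in $P_i$ is recovered by transitivity through $c_j'$.

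The delicate step --- and the main obstacle --- is verifying that $P_i''$ still extends $CL(Q_i\cup\{(c_k,c_j)\})$; since $P_i''$ differs from $P_i'$ only in the order of $c$ and $c'$, this is equivalent to showing the single flipped pair is not already forced, in its $P_i'$-direction, by $CL(Q_i\cup\{(c_k,c_j)\})$. It is not forced by $CL(Q_i)$ alone, since $CL(Q_i)\subseteq P_i$ and we just showed $P_i$ orders the pair the other way. In case (A) this finishes the job: a pair forced only through the added edge $(c_k,c_j)$ would in particular require $c$ to lie below $c_j$ w.r.t.\ $CL(Q_i)\subseteq P_i$, contradicting $P_i(c,c_j')$ together with $P_i(c_j',c_j)$. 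Case (B) is where care is needed, because there the witness $c$ may coincide with $c_j$ or $c_k$. If the pair were forced through $(c_k,c_j)$ with $c\notin\{c_j,c_k\}$ and $c'\neq c_j$, then $P_i'$ would have to satisfy the chain $c\succ\dots\succ c_k\succ c_j\succ\dots\succ c'$, placing $c_k$ strictly between the $P_i'$-adjacent pair $c,c'$ --- impossible. The remaining possibilities I would rule out using Lemma~\ref{lemma:swap_is_close} (nothing is strictly between $c_k$ and $c_j$ in $P_i'$, so $c'=c_j$ would force $c=c_k$, while $c=c_k$ being a witness would force $c_j$ to be a strictly lower witness, contradicting the minimality of $c$) and the ``Common Givens'' assumption that $CL(Q_i)$ commits to no order between $c_j$ and $c_k$ (which kills the degenerate chain that would be needed when $c=c_j$). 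With $P_i''$ both consistent and strictly closer to $P_i$, we contradict $P_i'\in\mu(P_i)$, establishing Equation~\ref{lemma:num7:eqn1}; Equation~\ref{lemma:num7:eqn2} is its mirror image and follows by the analogous argument with $c_j$ and $c_k$, and ``above'' and ``below'', interchanged.
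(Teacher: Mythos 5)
Your proof follows essentially the same route as the paper's: assume the interval equality fails, extract an adjacent pair of $P_i'$ that is inverted relative to $P_i$, transpose it to get a $P_i''$ strictly closer to $P_i$, and check that $P_i''$ still extends $CL\left(Q_i\cup\{(c_k,c_j)\}\right)$, contradicting $P_i'\in\mu(P_i)$. Your cases (A) and (B) are exactly the paper's first two sub-cases, and your consistency verification (that the flipped pair cannot be forced through the added edge $(c_k,c_j)$, with the careful treatment of $c\in\{c_j,c_k\}$ via Lemma~\ref{lemma:swap_is_close} and the ``Common Givens'' no-commitment assumption) is actually spelled out more fully than in the paper, which simply asserts consistency of the swapped order.

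The one divergence worth flagging: your opening move -- ``the two sets differ, so their symmetric difference is nonempty'' -- only covers the situation where some candidate changes sides relative to $c_j'$. The paper's proof handles a third sub-case in which $\{c\mid P_i(c,c_j')\}=\{c\mid P_i'(c,c_j')\}$ as sets but the internal order of that block differs, i.e.\ it reads the equality $\segleft{\infty}{c_j'}{P_i}=\segleft{\infty}{c_j'}{P_i'}$ as equality of ordered sequences, not merely of sets. Under the set reading (which is all that the later applications in Lemma~\ref{lemma:num9} and Theorem~\ref{thm:algo_correct} actually use, since they manipulate these intervals with $\setminus$ and $\cap$), your proof is complete. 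If the ordered reading is intended, the missing piece is easy to recover: once set equality holds, every element of $\segleft{\infty}{c_j'}{P_i'}$ lies above $c_k$ in $P_i'$, so Lemma~\ref{lemma:num3} already forces that block to carry the order inherited from $P_i$; alternatively, one runs your same adjacent-transposition argument inside the common block, as the paper does. You should state which reading you are proving and, if the ordered one, add that one line.
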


\begin{proof}[Proof: Lemma~\ref{lemma:num7}]
Let us assume that the Equation~\ref{lemma:num7:eqn1} does not hold, in spite of the lemma's premise being true. That is, there exists a candidate $c_j'$ so that $P_i(c_j',c_j)$ and $\segleft{\infty}{c_j'}{P_i}\neq\segleft{\infty}{c_j'}{P_i'}$.

Three possible sub-cases exist in this context:
\begin{enumerate}
\item $\exists c\in C\ s.t.\ P_i(c_j',c)\land P_i'(c,c_j')$
\item $\exists c\in C\ s.t.\ P_i(c,c_j')\land P_i'(c_j',c)$
\item Neither of the above holds.
\end{enumerate}

If $\exists c\in C\ s.t.\ P_i(c_j',c)\land P_i'(c,c_j')$, then it is easy to see that a pair of candidates $(c,c')$ exists so that $\segopen{c}{c'}{P_i'}=\emptyset$, $P_i'(c,c')$, and either $P_i(c',c_j')$ or $c'=c_j'$. Let us then obtain a preference order $P_i''$ from $P_i'$ by swapping $c$ and $c'$. It holds that $\kendal(P_i,P_i'')<\kendal(P_i,P_i')$ and, in addition, $P_i' \models CL\left(Q_i\cup\{(c_k,c_j)\}\right)$. Hence, we contradict the lemma's premise that $P_i'\in\mu(P_i)$.

The sub-case where it holds that $\exists c\in C\ s.t.\ P_i(c,c_j')\land P_i'(c_j',c)$ is similar to the above.

Let us now investigate the third sub-case. It occurs if there is no element that has switched from being above (below) $c_j'$ in $P_i$ to being below (above) $c_j'$ in $P_i'$. In particular the following two sets are equal (as sets):
$$B=\{c\in C| P_i(c,c_j')\}=\{c\in C| P_i'(c,c_j')\}$$
If $\segopen{\infty}{c_j'}{P_i}=\emptyset$, then the assumption of Equation~\ref{lemma:num7:eqn1} not being true can not hold.
If, however, $\segopen{\infty}{c_j'}{P_i}\neq\emptyset$, then $P_i\downarrow_{B}\neq P_i'\downarrow_{B}$. That is, there is a pair of candidates $c,c'\in B$ so that $\segopen{c'}{c}{P_i'}=\emptyset$, $P_i(c,c')$ and $P_i'(c',c)$. Defining an alternative order $P_i''$ obtained from $P_i'$ by swapping $c$ and $c'$, we once again obtain a contradiction to the premise $P_i'\in\mu(P_i)$.

We conclude that Equation~\ref{lemma:num7:eqn1} must hold. Symmetric proof establishes Equation~\ref{lemma:num7:eqn2}.

\end{proof}

Now, as Lemma~\ref{lemma:swap_is_close} showed, $c_j$ and $c_k$ are placed next to each other, when changing the preference order from $P_i$ to $P_i'$. However, to achieve this some other elements may need to be separated. The following lemma shows that this does not occur without need. That is, if two elements were placed next to each other in $P_i$, but not in $P_i'$, then they were separated to accommodate the placement of $c_k$ and $c_j$ between them.

\begin{lemma}\label{lemma:num8}
Let $P_i'\in\mu(P_i)$, and let $a,b\in C$ be two candidates so that $P_i\downarrow_{\{a,b\}}=P_i'\downarrow_{\{a,b\}}$, $\segopen{a}{b}{P_i}=\emptyset$, and $\segopen{a}{b}{P_i'}\neq\emptyset$. Then $a\in\segleft{\infty}{c_k}{P_i'}$ and $b\in\segright{c_j}{-\infty}{P_i'}$.
\end{lemma}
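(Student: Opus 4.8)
\emph{Proof plan.} My plan is to prove Lemma~\ref{lemma:num8} by contradiction, leaning entirely on the two structural facts already available for profiles in $\mu(P_i)$: that $c_k$ must sit immediately above $c_j$ in $P_i'$ (Lemma~\ref{lemma:swap_is_close}), and that $P_i$ and $P_i'$ agree on the relative order of the candidates weakly above $c_k$ (Equation~\ref{lemma:num3:eqn1}) and of those weakly below $c_j$ (Equation~\ref{lemma:num3:eqn2}). First I would fix orientation: by the descending-order convention for intervals, $\segopen{a}{b}{P_i}=\emptyset$ together with $P_i\downarrow_{\{a,b\}}=P_i'\downarrow_{\{a,b\}}$ says that $a\succ b$ in both $P_i$ and $P_i'$ with $a$ directly above $b$ in $P_i$, while $\segopen{a}{b}{P_i'}\neq\emptyset$ supplies some candidate strictly between $a$ and $b$ in $P_i'$. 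Since $P_i'\in\mu(P_i)$, Lemma~\ref{lemma:swap_is_close} gives $\segopen{c_k}{c_j}{P_i'}=\emptyset$, so $\segleft{\infty}{c_k}{P_i'}$ (an up-set of $P_i'$) and $\segright{c_j}{-\infty}{P_i'}$ (a down-set of $P_i'$) are disjoint and cover $C$.

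Next I would run the case split. Suppose the conclusion fails. Every candidate lies in exactly one of the two parts, so $a$ and $b$ fall into one of four configurations; the configuration ``$a$ in the lower part, $b$ in the upper part'' is excluded by $a\succ b$ in $P_i'$, and the configuration ``$a$ in the upper part, $b$ in the lower part'' is exactly the desired conclusion. Hence failure of the conclusion forces either (i) $a,b\in\segleft{\infty}{c_k}{P_i'}$, or (ii) $a,b\in\segright{c_j}{-\infty}{P_i'}$. These two cases are mirror images of one another (swap the roles of the up-set and the down-set, and of Equation~\ref{lemma:num3:eqn1} and Equation~\ref{lemma:num3:eqn2}), so it suffices to dispatch (i).

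For (i): pick $c\in\segopen{a}{b}{P_i'}$, which exists by hypothesis. Since $\segleft{\infty}{c_k}{P_i'}$ is an up-set of $P_i'$ that contains $b$ and $c\succ b$ in $P_i'$, we get $c\in\segleft{\infty}{c_k}{P_i'}$; of course $a,b$ lie there too. Equation~\ref{lemma:num3:eqn1} then says $P_i$ and $P_i'$ order $\{a,b,c\}$ identically, so from $a\succ c\succ b$ in $P_i'$ we obtain $a\succ c\succ b$ in $P_i$, i.e. $c\in\segopen{a}{b}{P_i}$ --- contradicting $\segopen{a}{b}{P_i}=\emptyset$. The mirror argument (using Equation~\ref{lemma:num3:eqn2} and the down-set $\segright{c_j}{-\infty}{P_i'}$) rules out (ii), and the lemma follows.

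The argument is short because the heavy lifting was done in Lemmas~\ref{lemma:swap_is_close} and~\ref{lemma:num3}; the only place I expect to have to be careful is the degenerate overlap with the query candidates, i.e. checking that the up-set/down-set membership claims survive when $a$ or $b$ happens to equal $c_k$ or $c_j$ (the intervals are inclusive of $c_k$ and of $c_j$, so they do), and noting that cases (i) and (ii) cannot actually occur with $a=c_j$ or $b=c_k$, since that would put $a\succ b$ in $P_i'$ and $c_k\succ c_j$ in $P_i'$ in conflict. These are routine sanity checks rather than a genuine obstacle.
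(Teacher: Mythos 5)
Your proof is correct, but it takes a genuinely different route from the paper's. The paper proves Lemma~\ref{lemma:num8} by a direct minimality argument: assuming the conclusion fails, it notes that $c_j,c_k\not\in\segopen{a}{b}{P_i'}$, observes that the witness $c\in\segopen{a}{b}{P_i'}$ forces $P_i\downarrow_{\segment{a}{b}{P_i'}}\neq P_i'\downarrow_{\segment{a}{b}{P_i'}}$, extracts an adjacent-in-$P_i'$ inverted pair $(c',c)$ inside that interval, and swaps it to produce a $P_i''\models CL\left(Q_i\cup\{(c_k,c_j)\}\right)$ with $\kendal(P_i,P_i'')\lneq\kendal(P_i,P_i')$ --- contradicting $P_i'\in\mu(P_i)$. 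You instead derive the lemma as a consequence of two already-established facts: Lemma~\ref{lemma:swap_is_close} (so $\segleft{\infty}{c_k}{P_i'}$ and $\segright{c_j}{-\infty}{P_i'}$ partition $C$) and Lemma~\ref{lemma:num3} (order preservation on each part), reducing everything to a clean four-way case split in which the two surviving bad cases each place $\{a,b,c\}$ entirely inside one part, whence Equation~\ref{lemma:num3:eqn1} or~\ref{lemma:num3:eqn2} transfers $a\succ c\succ b$ back to $P_i$ and contradicts $\segopen{a}{b}{P_i}=\emptyset$. Your version buys brevity and avoids re-verifying consistency of a freshly swapped profile with $Q_i\cup\{(c_k,c_j)\}$ (a point the paper treats rather tersely, since it must check that the swapped pair is neither $(c_k,c_j)$ nor constrained by $Q_i$); the paper's version is self-contained and does not lean on Lemma~\ref{lemma:num3}, which keeps the dependency chain among the lemmata flatter. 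Both arguments are sound; your closing remarks about the degenerate cases $a=c_k$ or $b=c_j$ are indeed only sanity checks and do not affect the argument.
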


\begin{proof}[Proof: Lemma~\ref{lemma:num8}]
Let us assume that the Lemma's conclusion does not hold. In particular this would mean that $c_j,c_k\not\in\segopen{a}{b}{P_i'}$. On the other hand, $\segopen{a}{b}{P_i'}\neq\emptyset$, so there is a candidate $c\in\segopen{a}{b}{P_i'}$. Because $a$ and $b$ are next to each other in the preference ordering $P_i$, i.e. $\segopen{a}{b}{P_i}=\emptyset$, it holds that either $P_i(c,a)$ or $P_i(b,c)$. Which, in turn, implies that $P_i\downarrow_{\segment{a}{b}{P_i'}}\neq P_i'\downarrow_{\segment{a}{b}{P_i'}}$. Therefore, there is $c'\in\segment{a}{b}{P_i'}$ so that $c'\neq c$ and $\segopen{c'}{c}{P_i'}=\emptyset$, i.e. $c$ and $c'$ are next to each other in the ordering $P_i'$. Furthermore, it must hold that these two elements were switched betwen $P_i$ and $P_i'$, that is $P_i(c,c')$ and $P_i'(c',c)$. Let us define a new preference order $P_i''$ by swapping $c$ and $c'$ in $P_i'$. It would hold that $\kendal(P_i'',P_i)\lneq\kendal(P_i',P_i)$, while $P_i''\models CL\left(Q_i\cup\{(c_k,c_j)\}\right)$, thus contradicting the premise that $P_i'\in\mu(P_i)$.
\end{proof}

One final observation that we will need to prove Theorem~\ref{thm:algo_correct} has to do with the general change in the relative position of elements committed by $Q_i$ to a particular order w.r.t $c_k$ or $c_j$. Lemma~\ref{lemma:num9} shows that among all elements above (correspondingly, below) $c_k$ (correspondingly, $c_j$) only those committed to be ordered after $c_j$ (correspondingly, before $c_k$) will change their relative position when moving from preference order $P_i$ to $P_i'$. All other elements will maintain their order.

\begin{lemma}\label{lemma:num9}
Let us assume that $P_i'\in\mu(P_i)$. Let $c$ be some candidate so that $P_i'(c,c_k)$ and $(c,c_k)\not\in Q_i$. Then the following equality holds:
$$\segleft{\infty}{c}{P_i'}=\segleft{\infty}{c}{P_i}\setminus\segright{c_j}{-\infty}{Q_i}.$$
Symmetrically, let $c$ be some candidate so that $P_i'(c_j,c)$ and $(c_j,c)\not\in Q_i$. Then:
$$\segright{c}{-\infty}{P_i'}=\segright{c}{-\infty}{P_i}\setminus\segleft{\infty}{c_k}{Q_i}.$$
\end{lemma}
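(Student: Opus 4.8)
The plan is to prove the first equality; the second one is symmetric (reverse every preference order, swapping the roles of $c_j,c_k$ and of ``above''/``below''), so I would only carry out the first. For the inclusion $\segleft{\infty}{c}{P_i'}\subseteq\segleft{\infty}{c}{P_i}\setminus\segright{c_j}{-\infty}{Q_i}$ I would take any $c''$ weakly above $c$ in $P_i'$. Since $P_i'(c,c_k)$ and $P_i'(c_k,c_j)$ (the query flip holds in every member of $\mu(P_i)$), $c''$ is strictly above $c_j$ in $P_i'$, hence $c''\notin\segright{c_j}{-\infty}{Q_i}$ because $P_i'\models CL(Q_i\cup\{(c_k,c_j)\})$ places that whole set weakly below $c_j$; and since $c''$ and $c$ both lie in $\segleft{\infty}{c_k}{P_i'}$, Lemma~\ref{lemma:num3} (Equation~\ref{lemma:num3:eqn1}) keeps their relative $P_i$-order, so $c''\in\segleft{\infty}{c}{P_i}$. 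The same observation gives $P_i(c,c_k)$, which I use below.

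For the reverse inclusion it suffices to show that an arbitrary $c'\in\segleft{\infty}{c}{P_i}\setminus\segright{c_j}{-\infty}{Q_i}$ lies in $\segleft{\infty}{c}{P_i'}$. Note $c\neq c_j$, so either $c$ is $P_i$-above $c_j$, or $P_i(c_j,c)$ and then $c\in\segopen{c_j}{c_k}{P_i}$ (using $P_i(c,c_k)$). If $c$ is $P_i$-above $c_j$, Lemma~\ref{lemma:num7} with $c_j':=c$ gives $\segleft{\infty}{c}{P_i}=\segleft{\infty}{c}{P_i'}$, and this set is disjoint from $\segright{c_j}{-\infty}{Q_i}$ (the latter is $P_i$-weakly-below $c_j$, hence strictly below $c$), which settles the case. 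If instead $c\in\segopen{c_j}{c_k}{P_i}$ and the chosen $c'$ is $P_i$-above $c_j$, then $c'\neq c_j$, so Lemma~\ref{lemma:num7} with $c_j':=c'$ yields $\segleft{\infty}{c'}{P_i}=\segleft{\infty}{c'}{P_i'}$, whence $P_i'(c',c)$ since $P_i(c',c)$. The central remaining case is $c\in\segopen{c_j}{c_k}{P_i}$ with $c'\in\segopen{c_j}{c}{P_i}$ and $(c_j,c')\notin Q_i$.

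In this central case I would argue by contradiction, in the spirit of the earlier lemmas but with a non-local move. Assuming $P_i'(c,c')$, Lemma~\ref{lemma:num3} (Equation~\ref{lemma:num3:eqn1}) forbids $c'$ from being weakly above $c_k$ in $P_i'$ (that would force $P_i(c,c')$), so $c'$ is strictly below $c_k$, hence by adjacency of $c_k,c_j$ (Lemma~\ref{lemma:swap_is_close}) strictly below $c_j$, in $P_i'$. Picking $c'$ highest w.r.t.\ $P_i'$ among the violators, Lemma~\ref{lemma:num3} (Equation~\ref{lemma:num3:eqn2}) together with $(c_j,c')\notin Q_i$ shows every element of $\segopen{c_j}{c'}{P_i'}$ is $Q_i$-committed below $c_j$, hence not $Q_i$-comparable to $c'$, while every element of $\segopen{c}{c_k}{P_i'}$ is $P_i$-below $c$ (Lemma~\ref{lemma:num3}) and again not $Q_i$-comparable to $c'$. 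I then build $P_i''\models CL(Q_i\cup\{(c_k,c_j)\})$ with $\kendal(P_i,P_i'')\lneq\kendal(P_i,P_i')$, contradicting $P_i'\in\mu(P_i)$: reinsert $c'$ just above $c$ when $|\segopen{c_j}{c'}{P_i'}|\le|\segopen{c}{c_k}{P_i'}|$; reinsert $c$ just below $c'$ when the reverse strict inequality holds and no element of $\segopen{c}{c_k}{P_i'}$ is $Q_i$-forced below $c$; and in the last case some $w\in\segopen{c}{c_k}{P_i'}$ satisfies the lemma's hypotheses with $c'$ witnessing its failure and $|\segopen{w}{c_k}{P_i'}|<|\segopen{c}{c_k}{P_i'}|$, so the lemma, proved by induction on $|\segopen{c}{c_k}{P_i'}|$, is already contradicted for $w$.

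The main obstacle is exactly this central case. Unlike Lemmas~\ref{lemma:num3} and~\ref{lemma:num7}, a single adjacent transposition does not work: within $\segment{c}{c'}{P_i'}$ the only $P_i$-inverted adjacent pair is the committed pair $(c_k,c_j)$, which may not be touched. Hence the genuinely non-local ``reinsertion'' step, the case split on the two interval sizes, and the supporting induction are all unavoidable, and each reinsertion must be checked to preserve consistency with $CL(Q_i\cup\{(c_k,c_j)\})$ — which goes through precisely because any relation $Q_i$ commits must already be respected by $P_i$, severely limiting which of the relevant pairs $Q_i$ can order.
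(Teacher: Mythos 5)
Your proof is correct and follows essentially the same route as the paper's: reduce via Lemmas~\ref{lemma:num3} and~\ref{lemma:num7} to the case $c\in\segopen{c_j}{c_k}{P_i}$, exhibit a crossed pair (your $(c,c')$, the paper's $(x,y)$) with one element pushed above $c_k$ and the other below $c_j$ in $P_i'$, and show that whichever of the two block reinsertions corresponds to the smaller of $|\segopen{c}{c_k}{P_i'}|$ and $|\segopen{c_j}{c'}{P_i'}|$ strictly decreases the swap distance, contradicting $P_i'\in\mu(P_i)$. The one place you go beyond the paper is the $Q_i$-consistency of the downward move of $c$, which the paper dismisses with ``no pair of candidates $x,y,c_j,c_k$ is restricted by $Q_i$'': your third case and the induction on $|\segopen{c}{c_k}{P_i'}|$ supply exactly the argument (implicit in the paper's choice of the violator $x$ as $P_i'$-minimal) needed to make that assertion rigorous.
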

\begin{proof}[Proof: Lemma~\ref{lemma:num9}]
  First, notice that if $c\not\in\segopen{c_j}{c_k}{P_i}$, then the lemma is a direct conclusion if Lemma~\ref{lemma:num7}. In more detail, if $P_i(c,c_j)$, then only the premise of the first equation holds. Furthermore, since $P_i\models Q_i$, $\segleft{\infty}{c}{P_i}\cap\segright{c_j}{-\infty}{Q_i}=\emptyset$. Thus, lemma's conclusion requires that $\segleft{\infty}{c}{P_i'}=\segleft{\infty}{c}{P_i}$, which holds due to Lemma~\ref{lemma:num7}. Symmetrically, if $P_i(c_k,c)$, then the premise of the second equation is true, and the conclusion similarly holds according to Lemma~\ref{lemma:num7}. Therefore, in the remainder of our proof of Lemma~\ref{lemma:num9}, we will assume that $c\in\segopen{c_j}{c_k}{P_i}$.

Now, let us assume that there is in fact a candidate $c\in C$ that satisfies the first premise of the lemma, but violates its conclusion. Denote by $X$ the following set:
$$X=\{c\in C| P_i'(c,c_k), (c,c_k)\not\in Q_i, \segleft{+\infty}{c}{P_i'}\neq\segleft{+\infty}{c}{P_i}\setminus\segright{c_j}{-\infty}{Q_i}\}$$
Let $x$ denote the least preferred candidate of $X$ w.r.t the preference order $P_i'$, i.e., for any $x\neq c\in X$ holds that $P_i'(c,x)$.

From Lemma~\ref{lemma:num3} we know that $P_i'\downarrow_{\segleft{+\infty}{x}{P_i'}}=P_i\downarrow_{\segleft{+\infty}{x}{P_i'}}$. Which also means that $\segleft{+\infty}{x}{P_i'}\subseteq\segleft{+\infty}{x}{P_i}$. Furthermore, since $P_i'\models CL\left(Q_i\cup\{(c_k,c_j)\}\right)$, for all $c\in\segright{c_j}{-\infty}{Q_i}$ holds that $P_i'(c_j,c)$ or $c=c_j$. Since $P_i'(x,c_j)$, by the transitivity of $P_i'$ it is also true that $P_i'(x,c)$ for all $c\in\segright{c_j}{-\infty}{Q_i}$. Hence, we obtain $\segleft{+\infty}{x}{P_i'}\cap \segright{c_j}{-\infty}{Q_i}=\emptyset$. In addition, since we have assumed that the lemma's conclusion does not hold, we obtain the following strong subsumption:
\begin{equation}
\segleft{+\infty}{x}{P_i'}\subset\segleft{+\infty}{x}{P_i}\setminus\segright{c_j}{-\infty}{Q_i}\label{lemma:num9:ineq1}
\end{equation}

This means, in particular, that there is a candidate $y\in C$ so that $P_i(y,x)$, $P_i'(x,y)$, and $(c_j,y)\not\in Q_i$. Taking into account Lemma~\ref{lemma:num3}, $P_i$ and $P_i'$ have the following overall structures:
\begin{itemize}
\item $P_i\ \ :\ \ \dots\succ c_j\succ\dots\succ y\succ\dots\succ x\succ\dots\succ c_k\succ\dots$
\item $P_i'\ \ :\ \ \dots\succ x\succ\dots\succ c_k\succ c_j\succ\dots\succ y\succ\dots$
\end{itemize}
Let us denote $A$ the number of candidates between $x$ and $c_k$ with respect to $P_i'$, i.e. $A=|\segleft{x}{c_k}{P_i'}|$, and, correspondingly $B=|\segright{c_j}{y}{P_i'}|$.

Consider now alternative preference orderings $R$ and $R'$, obtained from $P_i'$ by either moving $x$ below $y$ or, alternatively, moving $y$ just below $x$. That is, $P$ and $P'$ have the following structures:
\begin{itemize}
\item $R\ \ :\ \ \dots\succ c_k\succ c_j\succ\dots\succ y\succ x\succ\dots$
\item $R'\ \ :\ \ \dots\succ y\succ x\succ\dots\succ c_k\succ c_j\succ\dots$
\end{itemize}
Furthermore, $P_i'\downarrow_{C\setminus\{x,y\}}=R\downarrow_{C\setminus\{x,y\}}=R'\downarrow_{C\setminus\{x,y\}}$. Let us now denote by $D=\kendal(P_i,P_i')$, and consider $\kendal(P_i,R)$ and $\kendal(P_i,R')$.

It holds that $P_i\downarrow_{\segright{c_j}{y}{P_i'}}=R\downarrow_{\segright{c_j}{y}{P_i'}}$, while $P_i'(x,c)$ for any $c\in\segright{c_j}{y}{P_i'}$, hence $P$ is closer to $P_i$ by $B$ element swaps. At the same time $P_i\downarrow_{\segleft{x}{c_k}{P_i'}}=P_i'\downarrow_{\segleft{x}{c_k}{P_i'}}$, yet $R(c,x)$ for all $c\in\segleft{x}{c_k}{P_i'}$. Similarly the order of $x$ and $y$ is also ''restored'', i.e. it holds that $R(y,x)$, $P_i(y,x)$ and $P_i'(x,y)$. As a result we have $\kendal(P_i,P)=D-B+A-1$. Similarly $\kendal(P_i,R')=D-A+B-1$. Since either $-A+B-1<0$ or $-B+A-1<0$, we have that either $R$ or $R'$ is closer to $P_i$ than $P_i'$. Because no pair of candidates $x,y,c_j,c_k$ is restricted by $Q_i$, we also have that both $R\models CL\left(Q_i\cup\{(c_k,c_j)\}\right)$ and $R'\models CL\left(Q_i\cup\{(c_k,c_j)\}\right)$, therefore violating the assumption of $P_i'\in\mu(P_i)$.
\end{proof}

We are now ready to prove Theorem~\ref{thm:algo_correct} about the correctness of our algorithm.

\begin{proof}[Proof: Theorem~\ref{thm:algo_correct}]
First, let us do away with the question of computational complexity of the Algorithm~\ref{algo:meta}, as the simpler portion of the algorithm's analysis.

Prior to the main loop of the algorithm, a preliminary feasibility of manipulation is run in Line~\ref{algo:check:lemma:general_manipulability}, based on Lemma~\ref{lemma:general_manipulability}. The Lemma includes a finite number of membership checks, each of which runs in time linear in the number of candidates. It does, however, presume that the set of possible winners can be obtained efficiently. Since we use the definition of the PW set from ~\cite{Konczak05,lu_boutilier_2013,naamani-dery_etal_2014}, it can be found efficiently in the number of voters and candidates. Therefore, the overall preliminary check of Line~\ref{algo:check:lemma:general_manipulability} is polynomial in both voter and candidate set sizes.

Once the pre-check is complete the main loop of the algorithm is repeated for every candidate in the worst case. If we show that each loop is polynomial in the size of the problem as well, the overall algorithm's complexity will be obtained.

Lines~\ref{algo:build:x_good}-\ref{algo:build:y_bad} operate on ordered subsets of the candidate set, and each such operation takes at most $2|C|$ basic steps to complete. The arguments of these operations are also polynomial-time constructed. One, $P_i$, is given explicitly as input, and taking a sub-interval of it is linear in $|C|$. The other is obtained, e.g., by a spanning tree traversal of $Q_i$. However, $Q_i$ set is at most quadratic in the number of candidates. Hence, the calculates of sets $X_{good},X_{bad}, Y_{good}$ and $Y_{bad}$ take time, polynomial in the candidate set size. Line~\ref{algo:build:ensure_order} is a sanity check, since the aforementioned subsets of $C$ were obtained from an ordered sequence of candidates, hence linear in $|C|$. Similarly, construction of the new preference order $P_i'$ takes linear time. Calculating the distance $\kendal(P_i,P_i')$ (Line~\ref{algo:build:kendal}) takes at most $|C|^2$ steps, as it is equivalent to running the bubble-sort algorithm. The last non-trivial step, Line~\ref{algo:check:ld}, depends on how efficiently we can confirm the Local Dominance property of $P_i'$ with respect to $P_i$. This confirmation, however, can be performed by using the three conditions of Theorem~\ref{manipulation_features}. As we have already accounted for the calculation of the PW set, each condition of the Theorem takes time linear in $|C|$. We conclude that the main loop runs at most in $O(|C|^3)$. Hence, the overall run time of the algorithm is polynomial in the sizes of $V$ and $C$ sets.

Now, given that we know that the Algorithm~\ref{algo:meta} operates in polynomial time, let us prove that it operates correctly.

Let $P_i'\in\mu(P_i)$, and let us analyze its structure.

Let us assume for the moment, that there are two elements $a_1,b_2\in C$ that satisfy the conditions of Lemma~\ref{lemma:num9}, i.e., $P_i'(a_1,c_k), P_i'(c_j,b_2)$ and $(a_1,c_k)\not\in Q_i$, $(c_j,b_2)\not\in Q_i$. Furthermore, let us assume that $a_1$ is the minimum ($b_2$ is the maximum) such element with respect to $P_i'$. Combining this assumption with Lemma~\ref{lemma:swap_is_close}, $P_i'$ can be broken down into the following structure $P_i'=(F_j,G_k,c_k,c_j,G_j,F_k)$, where the intervals $F_j,F_k,G_j,G_k$ are characterized as follows:
\begin{itemize}
\item $F_j=\segleft{\infty}{a_1}{P_i'}$,
\item $F_k = \segright{b_2}{-\infty}{P_i'}$,
\item $G_k\subset \segopen{\infty}{c_k}{Q_i}$
\item $G_j\subset \segopen{c_j}{-\infty}{Q_i}$
\end{itemize}

Let us now denote $a_2,b_1\in C$, so that $P_i(a_1,a_2), P_i(b_1,b_2)$ and $\segopen{a_1}{a_2}{P_i}=\emptyset$, $\segopen{b_1}{b_2}{P_i}=\emptyset$.  Such $a_2$ and $b_1$ exist, since $P_i(a_1,c_k)$ and $P_i(c_j,b_2)$ due to Lemma~\ref{lemma:num3}. Furthermore, it also entails that $P_i(a_2,c_k)$ and $P_i(b_1,c_j)$.

The following three subcases are possible.
\begin{itemize}
\item Either $\segopen{a_1}{a_2}{P_i'}\neq\emptyset$ or $\segopen{b_1}{b_2}{P_i'}\neq\emptyset$.
\begin{itemize}
\item {\bf Sub-Case A:} $a_2=b_2$
\item {\bf Sub-Case B:} $P_i'(a_2,b_2)$
\end{itemize}
\item {\bf Sub-Case C:} Both $\segopen{a_1}{a_2}{P_i'}=\emptyset$ and $\segopen{b_1}{b_2}{P_i'}=\emptyset$.
\end{itemize}

W.l.g., let us first assume that $\segopen{a_1}{a_2}{P_i'}\neq\emptyset$. Then, according to Lemma~\ref{lemma:num8}, $P_i'(c_j,a_2)$. Combining this with Lemma~\ref{lemma:num3}, we obtain that $P_i(c_j,a_2)$. Thus, we also conclude that $a_2\in\segment{c_j}{c_k}{P_i}$.

\paragraph{Sub-Case A.\\}
If in addition, $a_2=b_2$, then the following holds according to Lemma~\ref{lemma:num9} and seting $z=a_2$ in Algorithm~\ref{algo:meta}:
$$F_j=\segleft{\infty}{a_1}{P_i'}=\segleft{\infty}{a_1}{P_i}\setminus\segright{c_j}{-\infty}{Q_i}=\segopen{\infty}{a_2}{P_i}\setminus\segright{c_j}{-\infty}{Q_i}=X_{good}$$
$$F_k=\segright{a_2}{-\infty}{P_i'}=\segright{a_2}{-\infty}{P_i}\setminus\segleft{\infty}{c_k}{Q_i}=Y_{good}$$
By its definition $G_j\subset \segopen{c_j}{-\infty}{Q_i}$. Furthermore, $G_j\subset\segment{c_j}{a_2}{P_i'}$. Thus, by Lemma~\ref{lemma:num3}, $G_j\subset\segment{c_j}{a_2}{P_i}$. Since $a_2\not\in G_j$, we conclude that $G_j\subset\segopen{\infty}{a_2}{P_i}$. Hence, by setting $z=a_2$ in Algorithm~\ref{algo:meta}, we have:
$$G_j=\segopen{\infty}{a_2}{P_i}\cap\segopen{c_j}{-\infty}{Q_i}=X_{bad}$$
Similarly, $G_k\subset\segopen{\infty}{c_k}{Q_i}$ by definition. Furthermore, $G_k\subset\segment{a_1}{c_k}{P_i'}$. Thus, by Lemma~\ref{lemma:num3}, $G_k\subset\segment{a_1}{c_k}{P_i}$. Since $a_1\not\in G_k$ and $\segopen{a_1}{a_2}{P_i}=\emptyset$, we can conclude that $G_k\subset\segright{a_2}{-\infty}{P_i}$. Letting $z=a_2$ in Algorithm~\ref{algo:meta}, we have:
$$G_k=\segright{z}{-\infty}{P_i}\cap\segopen{\infty}{c_k}{Q_i}=Y_{bad}$$

Thus we have $P_i'=(X_{good},Y_{bad},c_k,c_j,X_{bad},Y_{good})$ during a run of the Algorithm~\ref{algo:meta}, where $z=a_2$. That is, this sub-case of $P_i'\in\mu(P_i)$ will be recovered by the Algorithm~\ref{algo:meta}.
\paragraph{Sub-Case B.\\}
Let us now consider the situation where, rather than $a_2=b_2$, we have $P_i'(a_2,b_2)$. Similar to the case where $a_2=b_2$, we will have that $a_2\in\segment{c_j}{c_k}{P_i}$ and that $X_{good}=F_j$, $Y_{bad}=G_k$ when Algorithm~\ref{algo:meta} constructs a hypothetical manipulative preference profile with $z=a_2$. It remains to show that $X_{bad}$ and $Y_{good}$ combine into the segment $\segopen{c_j}{-\infty}{P_i'}=(G_j,F_k)$, and then conclude that, even if $P_i'(a_2,b_2)$, the preference profile $P_i'$ will be discovered by Algorithm~\ref{algo:meta} for $z=a_2$. To this end, let us have a closer look at segments $\segment{a_2}{b_2}{P_i'}$ and $\segment{a_2}{b_2}{P_i}$.

Let there be $x\in\segment{a_2}{b_2}{P_i}\setminus\segment{a_2}{b_2}{P_i'}$. If $P_i'(c_j,x)$, then, according to Lemma~\ref{lemma:num3}, we obtain a contradiction that $x\in\segment{a_2}{b_2}{P_i'}$. Hence $P_i'(x,c_k)$. On the other hand, it must be that $P_i(a_1,x)$, since $P_i(a_1,a_2)$ and $x\in\segment{a_2}{b_2}{P_i}$. Hence, $x\in G_k\subset \segopen{\infty}{c_k}{Q_i}$. Notice also that, due to Lemma~\ref{lemma:num3}, we have $\segment{a_2}{b_2}{P_i'}\setminus\segment{a_2}{b_2}{P_i}=\emptyset$. As a result, $\segright{a_2}{-\infty}{P_i'}=\segright{a_2}{-\infty}{P_i}\setminus \segopen{\infty}{c_k}{Q_i}=Y_{bad}$, where $Y_{bad}$ is computed for $z=a_2$.


Finally, notice that it is impossible to have $P_i'(b_2,a_2)$, and that the reasoning is symmetric for the case where $\segopen{b_1}{b_2}{P_i'}\neq\emptyset$. Hence, if either $\segopen{a_1}{a_2}{P_i'}\neq\emptyset$ or $\segopen{b_1}{b_2}{P_i'}\neq\emptyset$, then $P_i'$ is discovered by Algorithm~\ref{algo:meta}.
\paragraph{Sub-Case C.\\}
Let us now have a closer look at a $P_i'$ where $\segopen{a_1}{a_2}{P_i'}=\segopen{b_1}{b_2}{P_i'}=\emptyset$.

Denote $d_1,d_2$ a pair of candidates that satisfy conditions\footnote{Notice that such a pair always exists.} of Lemma~\ref{lemma:num9}, and, in addition, that $P_i(a_1,d_1)$ and $(a_1; P_i; d_1)$ is minimal.


Then the reasoning of Sub-Case B above can be repeated, replacing $b_2$ by $d_2$ in its arguments. We conclude that $P_i'$ with $(a_1; P_i'; a_2)=\emptyset$ and $(b_1;P_i';b_2)=\emptyset$ will also be discovered by Algorithm~\ref{algo:meta}:. In other worlds Algorithm~\ref{algo:meta} will discover all elements of $\mu(P_i)$. As the algorithm selects a locally dominant order $P_i'$ among all those that it finds, the final outcome will satisfy both condition Condition-1 and Condition-2.

\end{proof}

\subsection{Careful Voting Center: Securing Against ILD}\label{sec:ild:sub:careful_voting_center}
Our analysis of ILD was directed to find as many safe (in local dominance sense) manipulation opportunities as possible for a voter to adopt. This, however, does not mean that all quries would prompt a manipulation. In fact, Theorem~\ref{manipulation_features} places a clear limitation on the space of manipulable queries. Since the order of possible winners can not be altered by a locally-dominant manipulative change in the current preference profile, all queries regarding relative preference among two possible winners will follow $P_i$. Thus the queried voter will keep her current preference order unchanged.

\begin{cor}
Let us assume that voter $v_i$ at time $t$ has $P_i^t$ as its current preference order, and let $PW$ be the set of possible winners calculated by the voting center at time $t$. Then, forall $c_k,c_j\in PW$, $v_i$ will respond $(c_k,c_j)$ to a voter-item-item query $(v_i,c_k,c_j)$ if $P_i^t(c_k,c_j)$, and $(c_j,c_k)$ if $P_i^t(c_j,c_k)$. In particular, after answering the query $P_i^{t+1}=P_i^t$, i.e. the current preference profile will not change.
\end{cor}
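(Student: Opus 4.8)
The plan is to obtain this corollary as an immediate consequence of Theorem~\ref{manipulation_features} (equivalently, of Lemma~\ref{lemma:general_manipulability}): a query whose two candidates are both possible winners can never trigger a profile change, because altering the voter's answer would force a reordering of the possible winners, which a locally dominant manipulation is forbidden to do.

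First I would invoke the ``Common Givens'' and, w.l.o.g., take the query to be $\langle v_i\langle c_j,c_k\rangle\rangle$ with $P_i^t(c_j,c_k)$ (the case $P_i^t(c_k,c_j)$ is symmetric by relabelling), recalling also that neither $(c_j,c_k)$ nor $(c_k,c_j)$ is committed in $Q_i$. Suppose towards a contradiction that, at step~(d) of the interaction loop, $v_i$ adopts a profile $P_i'\neq P_i^t$. By the specification of Algorithm~\ref{algo:meta} (and Theorem~\ref{thm:algo_correct}), a changed profile is returned only when $P_i'$ satisfies Condition-1 and Condition-2 and $P_i'(c_k,c_j)$ holds. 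Now, since $c_j,c_k\in PW$, both candidates occur among $pw_1,\dots,pw_l$ in $PW_i$. By Theorem~\ref{manipulation_features}, any $P_i'$ that is locally dominant over $P_i^t$ preserves the order of possible winners, so the relative order of $c_j$ and $c_k$ under $P_i'$ must coincide with that under $P_i^t$. But $P_i^t(c_j,c_k)$ while $P_i'(c_k,c_j)$ --- a contradiction. Hence no admissible $P_i'\neq P_i^t$ exists.

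An equivalent and even more direct route is through Lemma~\ref{lemma:general_manipulability}: since $c_j,c_k\in PW$ we have $c_j,c_k\in\segment{pw_1}{pw_l}{P_i^t}$, and neither $P_i^t(c_j,pw_1)$ nor $P_i^t(pw_l,c_k)$ can hold (these are strict, and $c_j,c_k$ are themselves possible winners). Therefore none of the three disjuncts of Lemma~\ref{lemma:general_manipulability} is satisfied, so by its contrapositive there is no $P_i'\neq P_i^t$ meeting Condition-1 and Condition-2. Either way, Algorithm~\ref{algo:meta} returns $P_i^t$ unchanged, i.e. $P_i^{t+1}=P_i^t$, and then by step~(e) of the loop $v_i$ responds $(c_k,c_j)$ when $P_i^t(c_k,c_j)$ and $(c_j,c_k)$ when $P_i^t(c_j,c_k)$, which is exactly the claim.

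There is no genuinely hard step here --- the statement is essentially a repackaging of Theorem~\ref{manipulation_features}. The only points that need a little care are (i) the w.l.o.g.\ treatment of the two possible orientations of the query, handled by the symmetry already built into the ``Common Givens'', and (ii) the boundary situations where $c_j$ or $c_k$ coincides with $pw_1$ or $pw_l$, which are covered because Theorem~\ref{manipulation_features} constrains the order of \emph{every} consecutive pair of possible winners, the extreme ones included.
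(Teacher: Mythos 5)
Your proposal is correct and follows essentially the same route as the paper, which justifies the corollary in one sentence via Theorem~\ref{manipulation_features}: a locally dominant $P_i'$ cannot reorder any pair of possible winners, so no admissible profile change can flip the answer to a query between two members of $PW$. Your alternative derivation via the contrapositive of Lemma~\ref{lemma:general_manipulability} is also sound (neither $P_i^t(c_j,pw_1)$ nor $P_i^t(pw_l,c_k)$ can hold when $c_j,c_k\in PW$), but it is a strictly optional addition to the argument the paper already intends.
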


This opens the possibility to secure, at least in part, against manipulative voters affecting the voting center's calculation of the true Borda winner. We can simply avoid, as much as possible, using queries that {\em may} prompt a manipulation attempt by the queried voter.

\begin{definition}
Let us term a query $(v_i,c_k,c_j)$, where both $c_k,c_j\in PW$, a {\bf safe query}. 
A voting center that directs to voters only safe queries, unless no safe query exists, is termed a {\bf careful voting center}.
\end{definition}

\section{Experimental Validation}
We know that manipulations are not summarily avoidable, therefore our experiments do not serve as a manipulation feasibility study. Rather, the experiments provide an
important statistical insight, not addressing the question of “whether”, but “how frequently” manipulations are possible; and, what’s more important, how frequently do
they actually lead to altered election outcome.
In order to evaluate the manipulation impact on the preference elicitation process, we compared manipulative voters to truthful voters in a careful and in a Na{\"i}ve voting center setting.
A careful voting center selects only queries which are not manipulable, when such queries exist. If no such queries exist (since they have been previously used), the voting center stops being careful. A Na{\"i}ve voting center does not consider which queries are manipulable and which are not.

Algorithms that perform preference elicitation in iterations can be found in \cite{lu_boutilier_2011,lu_boutilier_2013,naamani2015,naamani-dery_etal_2014}. In this paper we use the Expected Scored (ES) algorithm found in \cite{naamani2015}. The ES algorithm selects a voter-item-item pair where one of the items is the item with the current maximum score. This algorithm is publicly available whereas some others are used commercially and cannot be tampered with \cite{lu_boutilier_2011,lu_boutilier_2013}. As a baseline we used an algorithm which randomly chooses the next query (denoted as RANDOM). Each algorithm (ES and RANDOM) was studied in three states:
\begin{enumerate}
\item The voters always answer truthfully (ES+T, RANDOM+T).
\item The voters attempt to manipulate (ES+M, RANDOM+M).
\item Manipulative voters with a careful voting center (Careful-ES+M, Careful-RANDOM+M).
\end{enumerate}

Experiments were performed on four real world data sets: the Sushi data set (5000 preference rankings over 10 candidates)~\cite {kamishima_etal_2005}, the T-shirt data set (30 preference rankings over 11 candidates), the Courses-2003 data set (146 preference rankings over 8 candidates) and the Courses-2004 data set (153 preference rankings over 7 candidates). The three latter data sets were taken from the Preflib library \cite{MaWa13a}. The data sets were used to generate responses to elicitation queries, assuming a Borda voting rule. A random set of voter preference profiles ($P$) was sampled with return out of each data set. For each experiment setting, 20 sets of random profiles were evaluated. For each set of profiles, the experiment was conducted 40 times. Thus we reach an amount of 800 experiments for each experiment setting.

The amount of candidates was set to the maximal amount in each data set. The amount of profiles $P$ was first varied on a range of 10, 20,.., 100 voters. Since we suspected that fewer voters may lead to more fragile situations, we also examined scenarios with only 4,5,..,20 voters.

In order to conclude which algorithm performs best over multiple data sets, we followed a robust non-parametric procedure proposed by \cite{Garcia10}. We first used the Friedman Aligned Ranks test in order to reject the null hypothesis that all heuristics perform the same. This was followed by the Bonferroni-Dunn test to find whether one of the heuristics performs significantly better than other heuristics.

\subsection{Measures}
The evaluation focused on the effect of the manipulations on the iterative process. We examined three dependent variables:

\begin{enumerate}
\item Manipulation rate in a careful and a regular voting center \text{-} When a voter has an opportunity to manipulate, she will. We therefore first set out to check how often do the voters actually have an opportunity to manipulate, and whether a careful voting center can reduce the manipulation rate.
\item	 Manipulation impact on the final result \text{-} Even when manipulations occur during the iterative process, it does not mean they impact on the final result, i.e., the chosen winner. We consider a manipulation process successful only if the chosen winner is different than the winner when no manipulations occur.
\item   Manipulations impact on the number of iterations \text{-} How manipulations impact the process length, measured as the number of iterations.
\end{enumerate}

\subsection{Results}\label{sec:exps}
In the following sections, we examine each of the measures in turn. 

\subsubsection{Manipulation rate in a careful and a regular voting center} Across all four data sets, the ratio of queries that have actually been manipulated is very small: the average  manipulation ratio in all experiments is 0.003  for the Sushi, T-shirt and Courses-2003 data sets, and 0.005 for the Courses-2004 data set. The highest result was received for the Courses-2004 data set, with the Random+M algorithm, and 4 voters: 0.0114 of the queries were manipulated.
To illustrate, in Figure \ref{fig:fig1}
we present the results obtained from the four data set. The x-axis presents the number of voters, and the y-axis the ratio of the queries manipulated (i.e., the number of manipulated queries divided by the number of total queries).
The behavior is similar in all data sets.
Unanimously across all data sets, the Friedman test detected a significant difference (with a p-value of at most 0.05) between the four manipulating algorithms: ES+M, RANDOM+M, Careful-ES+M and Careful-RANDOM+M. The Bonferroni-Dunn post-hoc test reveals that Careful-RANDOM+M outperforms the other algorithms, and that ES+M and Careful-ES+M outperform RANDOM+M. Namely, less manipulations occur with ES+M and Careful-ES+M algorithms than in the RANDOM+M algorithm, and even less manipulations occur with the Careful-RANDOM+M algorithm.

At first thought, one might expect Careful-ES+M to exhibit a lower manipulation rate than Careful-RANDOM+M, since RANDOM selects queries at Random. However, the manipulation rate is the lowest for Careful-RANDOM+M, regardless of the number of voters.
This can be explained by the query selection process.
Careful-RANDOM randomly selects a query out of all the safe queries available (i.e., queries that are "safe" cannot be manipulated).
ES algorithms first create a pool of possible queries that are "ES compatible" (according to the ES algorithm). The regular ES algorithm selects one of these queries at Random.
The Careful-ES algorithm has an additional selection requirement - a query that is both ES compatible and safe. If none is found, it will select an ES query that is not safe and is thus exposed to manipulation more often. The set of ES compatible and safe is a subset of the safe queries and thus there are less opportunities for ES to send safe queries.

RANDOM+M is influenced by the number of voters: for less than 20 voters, the curve has a downward slope pattern.
The explanation is simple. For fewer voters, more manipulation possibilities can occur, thus less queries are safe and since RANDOM+M chooses a query at random, it is more prone to manipulation.

\begin{figure}%
\centering
\includegraphics[width=\columnwidth]{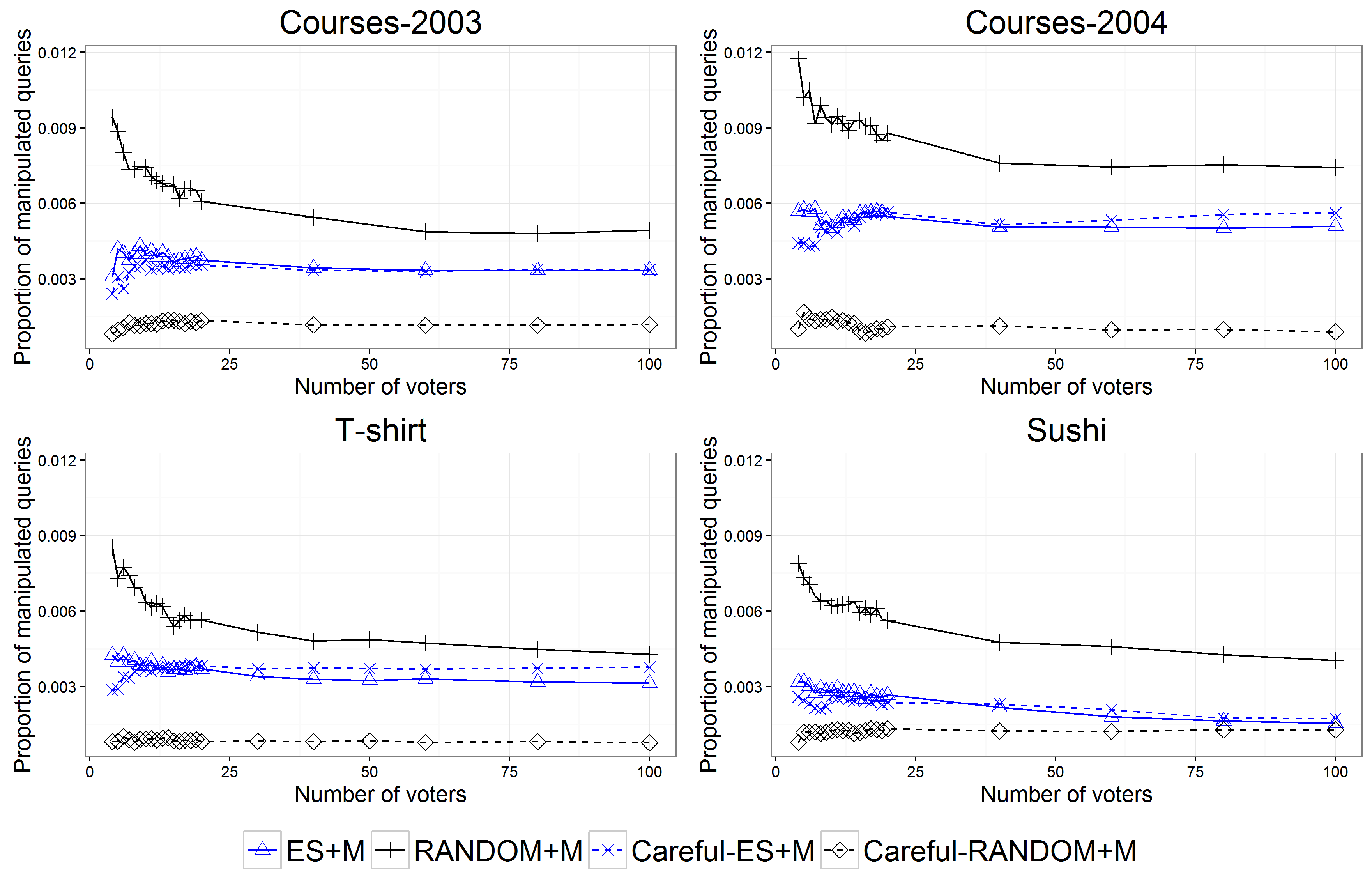}%
\caption{The proportion of manipulated queries. }%
\label{fig:fig1}%
\end{figure}

\subsubsection{Manipulation impact on the final result}
In some cases, there is zero impact on the final result, meaning that although manipulations might occur, they do not alter the final result, i.e., the winning candidate remains the same.
The most severe impact on the final result occurs while using the RANDOM+M algorithm, and even then the proportion of the cases where the final result is changed remains quite low with proportions of: 0.09,0.11,0.13 and 0.16 for the Sushi, Courses-2003,Courses-2004 and T-shirt data sets respectively. Figure \ref{fig:fig3} presents the average proportion of the impact on the final result (i.e., the number of cases where the manipulation changed the final result, out of all cases).
Unanimously across all data sets, the Friedman test detected a significant difference (with a p-value of at most 0.05) between the four manipulating algorithms: ES+M, RANDOM+M, Careful-ES+M and Careful-RANDOM+M.
The Bonferroni-Dunn post-hoc test reveals that Careful-RANDOM+M outperforms RANDOM+M. Namely, using the Careful-Random+M algorithm, the final result is less likely to deviate from the result obtained when all voters vote sincerely.
This conclusion aligns with the previous section - less manipulations occur with the Careful-RANDOM+M, therefore the outcome is less subject to change.

\begin{figure}%
\centering
\includegraphics[width=1\columnwidth]{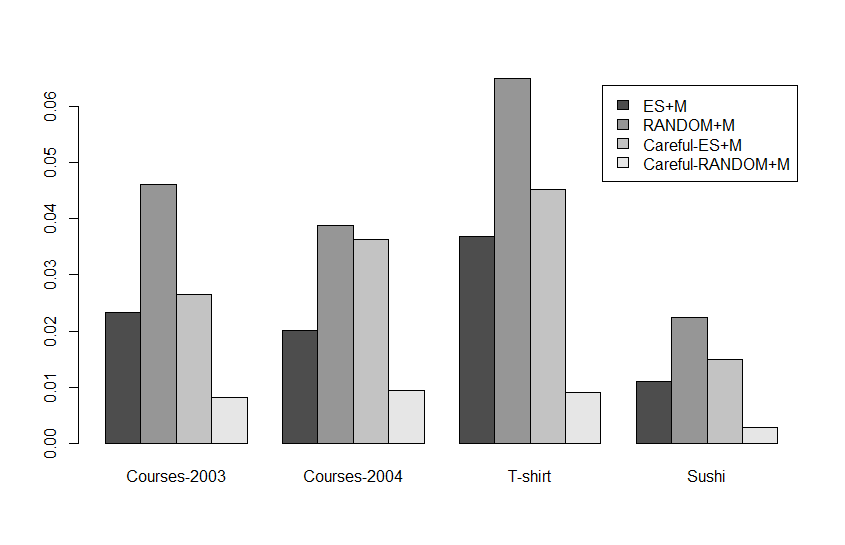}%
\caption{Average proportion of impact on the final result. }%
\label{fig:fig3}%
\end{figure}

\subsubsection{Manipulations impact on the number of iterations}
Figure \ref{fig:fig2} illustrates the average percentage of the data set queried until a necessary winner is found, for a changing number of voters.
Unanimously across all data sets, the Friedman test detected a significant difference (with a p-value of at most 0.05) between the 6 algorithms.
The Bonferroni-Dunn post-hoc test confirmed that ES outperforms all the RANDOM algorithm variations (RANDOM+T, RANDOM+M, Careful-RANDOM+M). Namely, the ES algorithm needs less queries in order to detect a necessary winner.
ES+M outperforms all  RANDOM algorithm variations except in the Courses-04 data set where there is no significant difference between ES+M and RANDOM+T
Careful-ES+M outperforms all  RANDOM algorithm variations except in the Sushi data set where there is no significant difference between Careful-ES+M and RANDOM+T and in the T-shirt data set where there is no significant difference between Careful-ES+M and Careful-RANDOM+M.
There is no significant difference between ES+M and Careful-ES+M. However, there is a significant difference in favor of ES: when no manipulations occur, the result is reached faster. A possible explanation is that manipulative responses cause candidates that would have been removed from the possible winners set in a regular iterative voting to still be considered as possible thus the set of possible winners decreases more slowly.

For the RANDOM algorithms the trend is the opposite: in all data sets but Sushi, Careful-RANDOM+M performs better than the other RANDOM variations. For the Sushi data set, RANDOM+M performs better than the other RANDOM variations.

\begin{figure}%
\centering
\includegraphics[width=1\columnwidth]{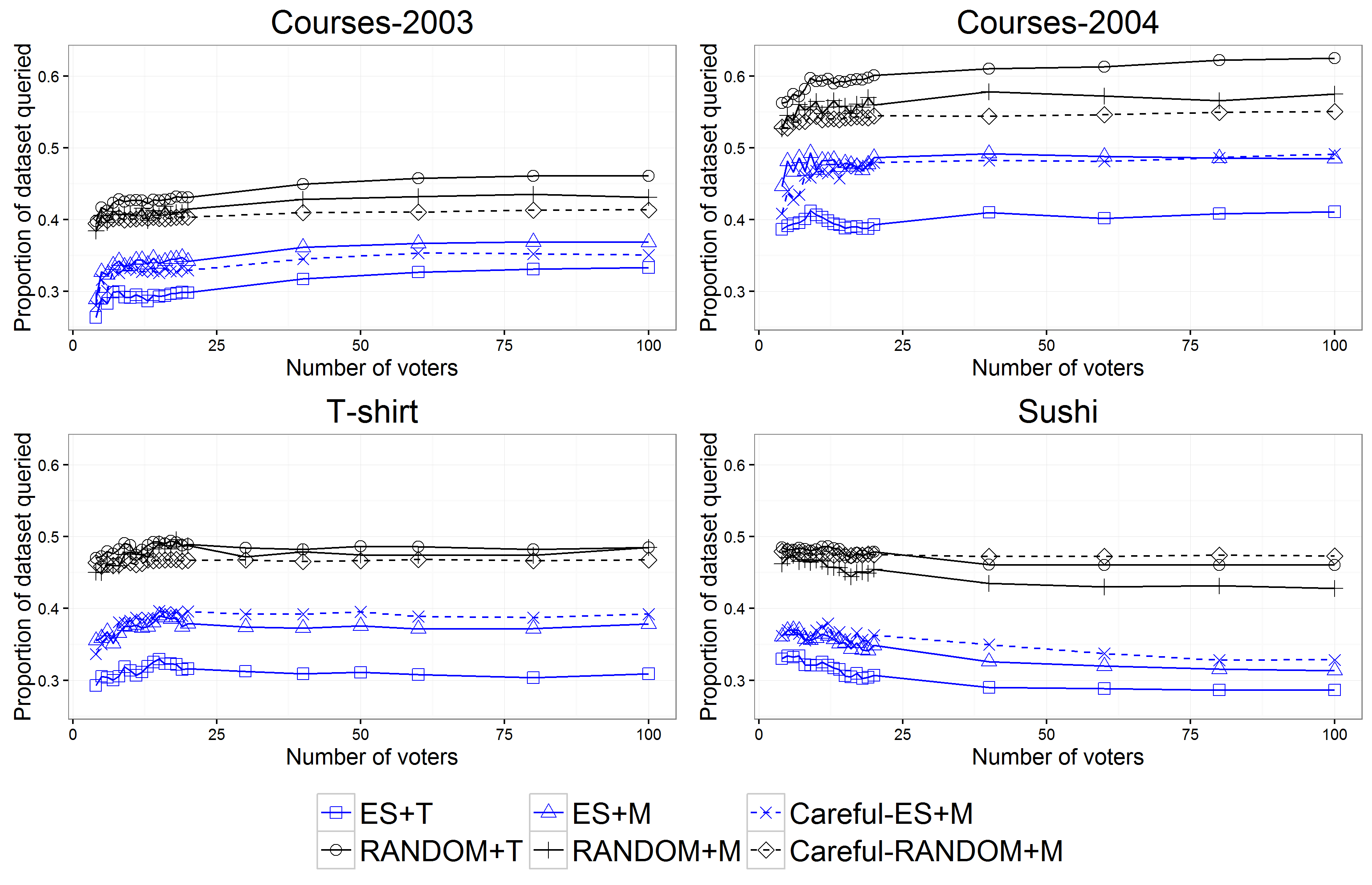}%
\caption{The proportion of the data set queried. }%
\label{fig:fig2}%
\end{figure}

The results indicate that the number of iterations can be reduced when the queries are chosen in a non-Random method - either using the ES algorithm, or using manipulations on the RANDOM algorithms. When some thought is put into choosing the queries, either via a careful voting center or via a non-random algorithm, less queries are used.
However, when the query selection process is not random (e.g. using the ES algorithm) the algorithm itself results in a lower number of iterations, and any additions to the algorithm, such as manipulations, or a careful voting center, only hamper the process.

We can conclude that in practice, manipulations do not pose a big threat on an incremental iterative voting process when the voters submit one preference at a time, since the manipulation rate is low and they rarely modify the outcome. A careful voting center can reduce the manipulation rate, but the tradeoff is that in turn, there is an increase in the amount of queries needed in order to end the iterative process and find a winner. We can therefore state that publishing the list of Possible Winners during a voting process is quite safe from a manipulation perspective.

\section{Conclusions}\label{sec:conclude}
In this work, we have developed and studied a novel combination of two iterative processes found in social choice: iterative preference elicitation using a voting center and the manipulative modification of preferences by voters in Iterative Voting. Traditionally, the design of the former intends to reduce the amount of query requests the voting center sends in order to obtain election outcomes, but assumes voters will reveal their true preference. The latter, on the other hand, presumes that voters may misreport their preferences and vote strategically.
We illustrated how a voter may attempt to manipulate a voting center. We provided a set of algorithms to detect and exploit manipulation opportunities that would drive the voting center to declare an election outcome that is more beneficial to the manipulating voter.
Our manipulation detection algorithms allow us to build a careful voting center that avoids manipulable queries.

Our experiments show that a careful voting center is effective.
Specifically, we show that:
(a) Voters do not have many manipulation opportunities to begin with. Even when we assume that a voter will manipulate whenever she has the opportunity, the ratio of queries that have actually been manipulated is very small.
The careful voting center reduces the manipulations and the manipulation rate is the lowest for Careful-RANDOM+M.
(b) There is a very low impact of manipulations on the final result. Even on the rare occasions that manipulations do occur, they usually do not alter the final result. Again, the careful voting center assists, and the most severe impact on the final result occurs while using the RANDOM+M algorithm.
(c) A careful voting center can reduce the manipulation rate, but the tradeoff is that there is an increase in the amount of queries
needed in order to end the iterative process and find a winner. Another tradeoff is that advanced elicitation schemes such as the one found in~\cite{naamani2015} can reduce the iteration process, but in turn are more prone to manipulations than a Random selection scheme.

Iterative voting processes with preference elicitation are sometimes needed (the reader is referred to the example in the introduction). In such situations, voters may attempt to vote strategically, since the list of possible winners is published or since the voters are able to guess it. We conclude that with careful design, the impact of strategic voting can be diminished.

\bibliography{iter_vote_centre}

\begin{thebibliography}{10}

\bibitem{aziz2015possible}
{\sc Aziz, H., Brill, M., Fischer, F., Harrenstein, P., Lang, J., and Seedig,
  H.~G.}
\newblock Possible and necessary winners of partial tournaments.
\newblock {\em Journal of Artificial Intelligence Research 54\/} (2015),
  493--534.

\bibitem{balakrishnan2012two}
{\sc Balakrishnan, S., and Chopra, S.}
\newblock Two of a kind or the ratings game? adaptive pairwise preferences and
  latent factor models.
\newblock {\em Frontiers of Computer Science 6}, 2 (2012), 197--208.

\bibitem{bn-dorr_2016_m-pref}
{\sc Bannikova, M., Naamani-Dery, L., Obraztsova, S., Rabinovich, Z., and
  Rosenschein, J.~S.}
\newblock Between fairness and a mistrial: Consensus under a deadline.
\newblock In {\em The 10th Workshop on Advances in Preference Handling (MPref),
  at the 25th International Joint Conference on Artificial Intelligence
  (IJCAI-2016)\/} (2016).

\bibitem{betzler2009multivariate}
{\sc Betzler, N., Hemmann, S., and Niedermeier, R.}
\newblock A multivariate complexity analysis of determining possible winners
  given incomplete votes.
\newblock In {\em Proceedings of the International Joint Conference on
  Artifical Intelligence (IJCAI)\/} (2009), vol.~9, pp.~53--58.

\bibitem{boergers_94}
{\sc Boergers, T.}
\newblock Weak dominance and approximate common knowledge.
\newblock {\em Journal of Economic Theory 64\/} (1994), 265--276.

\bibitem{bcmp_2013}
{\sc Branzei, S., Caragiannis, I., Morgenstern, J., and Procaccia, A.~D.}
\newblock How bad is selfish voting?
\newblock In {\em Proceedings of the AAAI Conference on Artificial
  Intelligence\/} (2013), pp.~138--144.

\bibitem{bredereck2016large}
{\sc Bredereck, R., Faliszewski, P., Niedermeier, R., and Talmon, N.}
\newblock Large-scale election campaigns: Combinatorial shift bribery.
\newblock {\em Journal of Artificial Intelligence Research 55\/} (2016),
  603--652.

\bibitem{brewka_91}
{\sc Brewka, G.}
\newblock {\em Nonmonotonic Reasoning: Logical Foundations of Commonsense}.
\newblock Cambridge University Press, 1991.

\bibitem{brewka_etal_97}
{\sc Brewka, G., Dix, J., and Konolige, K.}
\newblock {\em Non-monotonic reasoning: An Overview}.
\newblock CSLI Publications, Stanford, 1997.

\bibitem{brill_conitzer_2015}
{\sc Brill, M., and Conitzer, V.}
\newblock Strategic voting and strategic candidacy.
\newblock In {\em Proceedings of the AAAI Conference on Artificial
  Intelligence\/} (2015), vol.~15, pp.~819--826.

\bibitem{chc_2004}
{\sc Camerer, C.~F., Ho, T.-H., and Chong, J.-K.}
\newblock A cognitive hierarchy of games.
\newblock {\em The Quarterly Journal of Economics 119}, 3 (2004), 861--898.

\bibitem{capuano_etal_MOOC_2017}
{\sc Capuano, N., Loia, V., and Orciuoli, F.}
\newblock A fuzzy group decision making model for ordinal peer assessment.
\newblock {\em IEEE Transactions on Learning Technologies 10}, 2 (April 2017),
  247--259.

\bibitem{chen2013mining}
{\sc Chen, Y.-L., Cheng, L.-C., and Huang, P.-H.}
\newblock Mining consensus preference graphs from users' ranking data.
\newblock {\em Decision Support Systems 54}, 2 (2013), 1055--1064.

\bibitem{conitzer2009hard}
{\sc Conitzer, V., Lang, J., Xia, L., et~al.}
\newblock How hard is it to control sequential elections via the agenda?
\newblock In {\em Proceedings of the International Joint Conference on
  Artifical Intelligence (IJCAI)\/} (2009), vol.~9, pp.~103--108.

\bibitem{Conitzer05}
{\sc Conitzer, V., and Sandholm, T.}
\newblock Communication complexity of common voting rules.
\newblock In {\em {ACM} {EC}\/} (2005), pp.~78--87.

\bibitem{conitzer_walsh_xia_2011}
{\sc Conitzer, V., Walsh, T., and Xia, L.}
\newblock Dominating manipulations in voting wih partial information.
\newblock In {\em Proceedings of the AAAI Conference on Artificial
  Intelligence\/} (2011), pp.~638--643.

\bibitem{cc-gi_2013}
{\sc Crawford, V.~P., Cost-Gomes, M.~A., and Iriberri, N.}
\newblock Structural models of non-equilibrium strategic thinking: Theory,
  evidence, and applications.
\newblock {\em Journal of Economic Literature 51}, 1 (2013), 5--62.

\bibitem{damart2007supporting}
{\sc Damart, S., Dias, L.~C., and Mousseau, V.}
\newblock Supporting groups in sorting decisions: Methodology and use of a
  multi-criteria aggregation/disaggregation dss.
\newblock {\em Decision Support Systems 43}, 4 (2007), 1464--1475.

\bibitem{de1781memoire}
{\sc de~Borda, J.~C.}
\newblock {\em M{\'e}moire sur les {\'e}lections au scrutin}.
\newblock Histoire de l$\backslash$'Academie Royale des Sciences, 1781.

\bibitem{DE10}
{\sc Desmedt, Y., and Elkind, E.}
\newblock Equilibria of plurality voting with abstentions.
\newblock In {\em Proceedings of the 11th ACM conference on Electronic
  commerce\/} (2010), ACM, pp.~347--356.

\bibitem{ding_lin_2013}
{\sc Ding, N., and Lin, F.}
\newblock Voting with partial information: What questions to ask?
\newblock In {\em Proceedings of the 12th International Conference on
  Autonomous Agents and Multiagent Systems (AAMAS)\/} (2013), pp.~1237--1238.

\bibitem{DBLP:journals/corr/ElkindGRS17}
{\sc Elkind, E., Grandi, U., Rossi, F., and Slinko, A.}
\newblock Cognitive hierarchy and voting manipulation.
\newblock {\em CoRR abs/1707.08598\/} (2017).

\bibitem{eopr_2016}
{\sc Endriss, U., Obraztsova, S., Polukarov, M., and Rosenschein, J.~S.}
\newblock Strategic voting with incomplete information.
\newblock In {\em Proceedings of the International Joint Conference on
  Artifical Intelligence (IJCAI)\/} (2016), pp.~236--242.

\bibitem{Far69}
{\sc Farquharson, R.}
\newblock {\em Theory of Voting}.
\newblock Yale University Press, 1969.

\bibitem{Lapresta_2017}
{\sc García-Lapresta, J.~L., and Martínez-Panero, M.}
\newblock Positional voting rules generated by aggregation functions and the
  role of duplication.
\newblock {\em International Journal of Intelligent Systems\/} (2017),
  n/a--n/a.

\bibitem{Garcia10}
{\sc Garc\'{\i}a, S., Fern\'{a}ndez, A., Luengo, J., and Herrera, F.}
\newblock Advanced nonparametric tests for multiple comparisons in the design
  of experiments in computational intelligence and data mining: Experimental
  analysis of power.
\newblock {\em Inf. Sci. 180}, 10 (2010), 2044--2064.

\bibitem{Gib73}
{\sc Gibbard, A.}
\newblock Manipulation of voting schemes.
\newblock {\em Econometrica 41}, 4 (1973), 587--602.

\bibitem{gmytrasiewicz_doshi_2005}
{\sc Gmytrasiewicz, P.~J., and Doshi, P.}
\newblock A framework for sequential planning in multi-agent settings.
\newblock {\em Journal of Artificial Intelligence Research 24\/} (2005),
  49--79.

\bibitem{Grandi13}
{\sc Grandi, U., Loreggia, A., Rossi, F., Venable, K.~B., and Walsh, T.}
\newblock Restricted manipulation in iterative voting: Condorcet efficiency and
  borda score.
\newblock In {\em International Conference on Algorithmic Decision Theory
  (ADT)\/} (2013), Springer, pp.~181--192.

\bibitem{Halpern:1985:GML:1625135.1625229}
{\sc Halpern, J.~Y., and Moses, Y.}
\newblock A guide to the modal logics of knowledge and belief: Preliminary
  draft.
\newblock In {\em Proceedings of the 9th International Joint Conference on
  Artificial Intelligence (IJCAI)\/} (San Francisco, CA, USA, 1985), IJCAI'85,
  Morgan Kaufmann Publishers Inc., pp.~480--490.

\bibitem{Halpern:1990:KCK:79147.79161}
{\sc Halpern, J.~Y., and Moses, Y.}
\newblock Knowledge and common knowledge in a distributed environment.
\newblock {\em Journal of the ACM 37}, 3 (1990), 549--587.

\bibitem{Kalech11}
{\sc Kalech, M., Kraus, S., Kaminka, G.~A., and Goldman, C.~V.}
\newblock Practical voting rules with partial information.
\newblock {\em Autonomous Agents and Multi-Agent Systems 22}, 1 (2011),
  151--182.

\bibitem{kamishima_etal_2005}
{\sc Kamishima, T., Kazawa, H., and Akaho, S.}
\newblock Supervised ordering—an empirical survey.
\newblock In {\em Proceedings of the Fifth IEEE International Conference on
  Data Mining (ICDM)\/} (2005), IEEE Computer Society, pp.~673--676.

\bibitem{kendall_38}
{\sc Kendall, M.~G.}
\newblock A new measure of rank correlation.
\newblock {\em Biometrika 30}, 1-2 (1938), 81--93.

\bibitem{Konczak05}
{\sc Konczak, K., and Lang, J.}
\newblock Voting procedures with incomplete preferences.
\newblock In {\em Proc. IJCAI-05 Multidisciplinary Workshop on Advances in
  Preference Handling\/} (2005), vol.~20, Citeseer.

\bibitem{rus1}
{\sc Kukushkin, N.~S.}
\newblock Acyclicity of improvements in finite game forms.
\newblock {\em Int. J. Game Theory 40}, 1 (2011), 147--177.

\bibitem{laf87}
{\sc Laffont, J.-J.}
\newblock Incentives and the allocation of public goods.
\newblock In {\em Handbook of Public Economics}, vol.~2. Elsevier, 1987,
  ch.~10, pp.~537--569.

\bibitem{LR12}
{\sc Lev, O., and Rosenschein, J.~S.}
\newblock Convergence of iterative voting.
\newblock In {\em Proceedings of the 11th International Conference on
  Autonomous Agents and Multiagent Systems-Volume 2\/} (2012), International
  Foundation for Autonomous Agents and Multiagent Systems, pp.~611--618.

\bibitem{lev2016convergence}
{\sc Lev, O., and Rosenschein, J.~S.}
\newblock Convergence of iterative scoring rules.
\newblock {\em Journal of Artificial Intelligence Research 57\/} (2016),
  573--591.

\bibitem{lu_boutilier_2011}
{\sc Lu, T., and Boutilier, C.}
\newblock Robust approximation and incremental elicitation in voting protocols.
\newblock In {\em Proceedings of the International Joint Conference on
  Artifical Intelligence (IJCAI)\/} (2011), pp.~287--293.

\bibitem{lu_boutilier_2013}
{\sc Lu, T., and Boutilier, C.}
\newblock Multi-winner social choice with incomplete preferences.
\newblock In {\em Proceedings of the International Joint Conference on
  Artifical Intelligence (IJCAI)\/} (2013), pp.~263--270.

\bibitem{MaWa13a}
{\sc Mattei, N., and Walsh, T.}
\newblock Preflib: A library of preference data \textsc{http://preflib.org}.
\newblock In {\em Proceedings of the 3rd International Conference on
  Algorithmic Decision Theory (ADT 2013)\/} (2013), Lecture Notes in Artificial
  Intelligence, Springer.

\bibitem{MLR14}
{\sc Meir, R., Lev, O., and Rosenschein, J.~S.}
\newblock A local-dominance theory of voting equilibria.
\newblock In {\em Proceedings of the 15th ACM EC\/} (2014), pp.~313--330.

\bibitem{MPRJ10}
{\sc Meir, R., Polukarov, M., Rosenschein, J.~S., and Jennings, N.~R.}
\newblock Convergence to equilibria of plurality voting.
\newblock In {\em Proceedings of the AAAI Conference on Artificial
  Intelligence\/} (2010), pp.~823--828.

\bibitem{miller1956magical}
{\sc Miller, G.~A.}
\newblock The magical number seven, plus or minus two: Some limits on our
  capacity for processing information.
\newblock {\em Psychological review 63}, 2 (1956), 81.

\bibitem{monderer_samet_89}
{\sc Monderer, D., and Samet, D.}
\newblock Approximating common knowledge with common beliefs.
\newblock {\em Games and Economic Behavior 1\/} (1989), 170--190.

\bibitem{naamani2015}
{\sc Naamani-Dery, L., Golan, I., Kalech, M., and Rokach, L.}
\newblock Preference elicitation for group decisions using the borda voting
  rule.
\newblock {\em Group Decision and Negotiation 41}, 6 (2015), 1015–--1033.

\bibitem{naamani-dery_etal_2014}
{\sc Naamani-Dery, L., Kalech, M., Rokach, L., and Shapira, B.}
\newblock Reaching a joint decision with minimal elicitation of voter
  preferences.
\newblock {\em Information Sciences 278\/} (2014), 466--487.

\bibitem{naamani2016reducing}
{\sc Naamani-Dery, L., M.Kalech, L.Rokach, and B.Shapira}.
\newblock Reducing preference elicitation in group decision making.
\newblock {\em Expert Systems with Applications 61\/} (2016), 246--261.

\bibitem{Dery15}
{\sc Naamani-Dery, L., Obraztsova, S., Rabinovich, Z., and Kalech, M.}
\newblock Lie on the fly: Iterative voting center with manipulative voters.
\newblock In {\em Proceedings of the Twenty-Fourth International Joint
  Conference on Artificial Intelligence, {IJCAI} 2015, Buenos Aires, Argentina,
  July 25-31, 2015\/} (2015), Q.~Yang and M.~Wooldridge, Eds., {AAAI} Press,
  pp.~2033--2039.

\bibitem{nagel_95}
{\sc Nagel, R.}
\newblock Unraveling in guessing games: An experimental study.
\newblock {\em The American Economic Review 85}, 5 (1995), 1313--1326.

\bibitem{narodytska2014computational}
{\sc Narodytska, N., and Walsh, T.}
\newblock The computational impact of partial votes on strategic voting.
\newblock In {\em Proceedings of the 21st European Conference on Artificial
  Intelligence (ECAI-2014)\/} (2014), pp.~657--662.

\bibitem{oepr_2015}
{\sc Obraztsova, S., Elkind, E., Polukarov, M., and Rabinovich, Z.}
\newblock Strategic candidacy games with lazy candidates.
\newblock In {\em Proceedings of the 24th International Joint Conference on
  Artificial Intelligence (IJCAI-2015)\/} (2015), pp.~610--616.

\bibitem{olprr_MPREF_2016}
{\sc Obraztsova, S., Lev, O., Polukarov, M., Rabinovich, Z., and Rosenschein,
  J.~S.}
\newblock Non-myopic voting dynamics: An optimistic approach.
\newblock In {\em The 10th Workshop on Advances in Preference Handling (MPref),
  at the 25th International Joint Conference on Artificial Intelligence
  (IJCAI-2016)\/} (2016).

\bibitem{omprj_2015_aaai}
{\sc Obraztsova, S., Markakis, E., Polukarov, M., Rabinovich, Z., and Jennings,
  N.~R.}
\newblock On the convergence of iterative voting: How restrictive should
  restricted dynamics be?
\newblock In {\em Proceedings of the AAAI Conference on Artificial
  Intelligence\/} (2015), pp.~993--999.

\bibitem{orm_2014_ecai}
{\sc Obraztsova, S., Rabinovich, Z., and Madunts, A.}
\newblock Faustian dynamics in {Sarkar}'s social cycle.
\newblock In {\em Proceedings of the 21st European Conference on Artificial
  Intelligence (ECAI-2014)\/} (2014), pp.~1071 -- 1072.

\bibitem{Pfeiffer12}
{\sc Pfeiffer, T., Gao, X.~A., Mao, A., Chen, Y., and Rand, D.~G.}
\newblock Adaptive polling for information aggregation.
\newblock In {\em Proceedings of the 26th AAAI Conference on Artificial
  Intelligence\/} (2012), AAAI Press, pp.~122--128.

\bibitem{pini2007incompleteness}
{\sc Pini, M.~S., Rossi, F., Venable, K.~B., and Walsh, T.}
\newblock Incompleteness and incomparability in preference aggregation.
\newblock In {\em Proceedings of the International Joint Conference on
  Artifical Intelligence (IJCAI)\/} (2007), pp.~1464--1469.

\bibitem{porkj_2015}
{\sc Polukarov, M., Obraztsova, S., Rabinovich, Z., Kruglyi, A., and Jennings,
  N.~R.}
\newblock Convergence to equilibria in strategic candidacy.
\newblock In {\em Proceedings of the 24th International Joint Conference on
  Artificial Intelligence (IJCAI-2015)\/} (2015), pp.~624--630.

\bibitem{rolmr_2015_aaai}
{\sc Rabinovich, Z., Obraztsova, S., Lev, O., Markakis, E., and Rosenschein,
  J.~S.}
\newblock Analysis of equilibria in iterative voting schemes.
\newblock In {\em Proceedings of the AAAI Conference on Artificial
  Intelligence\/} (2015), pp.~1007--1013.

\bibitem{RE12}
{\sc Reijngoud, A., and Endriss, U.}
\newblock Voter response to iterated poll information.
\newblock In {\em Proceedings of the 11th International Conference on
  Autonomous Agents and Multiagent Systems (AAMAS)\/} (2012), pp.~635--644.

\bibitem{RW12}
{\sc Reyhani, R., and Wilson, M.}
\newblock Best reply dynamics for scoring rules.
\newblock In {\em Proceedings of the European Conference on Artificial
  Intelligence (ECAI)\/} (2012), pp.~672--677.

\bibitem{Sat75}
{\sc Satterthwaite, M.~A.}
\newblock Strategy-proofness and {A}rrow's conditions: Existence and
  correspondence theorems for voting procedures and social welfare functions.
\newblock {\em Journal of Econ. Theory 10}, 2 (1975), 187--217.

\bibitem{stahl_wilson_1994}
{\sc Stahl, D.~O., and Wilson, P.~W.}
\newblock Experimental evidence on players' models of other players.
\newblock {\em Journal of Economic Behavior and Organization 25}, 3 (1994),
  309--327.

\bibitem{walsh2007uncertainty}
{\sc Walsh, T.}
\newblock Uncertainty in preference elicitation and aggregation.
\newblock In {\em Proceedings of the AAAI Conference on Artificial
  Intelligence\/} (2007), pp.~3--8.

\bibitem{walsh2011hard}
{\sc Walsh, T.}
\newblock Where are the hard manipulation problems?
\newblock {\em Journal of Artificial Intelligence Research 42\/} (2011), 1--29.

\bibitem{wl-b_aaai_2010}
{\sc Wright, J.~R., and Leyton-Brown, K.}
\newblock Beyond equilibrium: predicting human behavior in normal-form games.
\newblock In {\em Proceedings of the 24th AAAI International Conference on
  Artificial Intelligence (AAAI)\/} (2010), pp.~901--907.

\bibitem{Conitzer11}
{\sc Xia, L., and Conitzer, V.}
\newblock Determining possible and necessary winners given partial orders.
\newblock {\em Journal of Artificial Intelligence Research 41\/} (2011),
  25--67.

\bibitem{xia2011maximum}
{\sc Xia, L., and Conitzer, V.}
\newblock A maximum likelihood approach towards aggregating partial orders.
\newblock In {\em Proceedings of the International Joint Conference on
  Artifical Intelligence (IJCAI)\/} (2011), pp.~446--451.

\bibitem{xia2011strategic}
{\sc Xia, L., Conitzer, V., and Lang, J.}
\newblock Strategic sequential voting in multi-issue domains and
  multiple-election paradoxes.
\newblock In {\em Proceedings of the 12th ACM conference on Electronic
  commerce\/} (2011), ACM, pp.~179--188.

\bibitem{zou_meir_parkes_2015}
{\sc Zou, J., Meir, R., and Parkes, D.}
\newblock Strategic voting behavior in {Doodle} polls.
\newblock In {\em Proceedings of the 18th ACM Conference on Computer-Supported
  Cooperative Work and Social Computing (CSCW)\/} (2015), pp.~464--472.

\end{thebibliography}

\end{document}